\definecolor{linkblue}{named}{Blue}
\let\oldmarginpar\marginpar
\renewcommand{\marginpar}[2][rectangle,draw,fill=yellow,rounded corners,text width=2.21cm]{%
        \oldmarginpar{%
        \tikz \node at (0,0) [#1]{#2};}%
        }
\DeclareMathOperator{\asf}{asf}
\DeclareMathOperator{\xsf}{(a)sf}
\DeclareMathOperator{\strf}{sf}
\DeclareMathOperator{\depth}{depth}
\DeclareMathOperator{\radius}{radius}
\DeclareMathOperator{\msst}{msst}
\newcommand{\eps}{\varepsilon}
\title{\MakeUppercase{Average Stretch Factor: How Low Does It Go?}}
\author{Vida Dujmovi\'c, Pat Morin, and Michiel Smid}
\begin{document}
\begin{titlepage}
\maketitle

\begin{abstract}
  In a geometric graph, $G$, the \emph{stretch factor} between two
  vertices, $u$ and $w$, is the ratio between the Euclidean length of
  the shortest path from $u$ to $w$ in $G$ and the Euclidean distance
  between $u$ and $w$.  The \emph{average stretch factor} of $G$ is
  the average stretch factor taken over all pairs of vertices in $G$.
  We show that, for any constant dimension, $d$, and any set, $V$, of
  $n$ points in $\R^d$, there exists a geometric graph with vertex set
  $V$, that has $O(n)$ edges, and that has average stretch factor $1+
  o_n(1)$.  More precisely, the average stretch factor of this graph is
  $1+O((\log n/n)^{1/(2d+1)})$.  We complement this upper-bound with a
  lower bound: There exist $n$-point sets in $\R^2$ for which any graph
  with $O(n)$ edges has average stretch factor $1+\Omega(1/\sqrt{n})$.
  Bounds of this type are not possible for the more commonly studied
  worst-case stretch factor.  In particular, there exists point sets,
  $V$, such that any graph with worst-case stretch factor $1+o_n(1)$
  has a superlinear number of edges.
\end{abstract}

\end{titlepage}

\section{Introduction}

A \emph{geometric graph} is a simple undirected graph whose vertex set
is a set of points in $\R^d$.  The \emph{average stretch factor} of
a finite connected geometric graph, $G=(V,E)$, with vertex set $V\subset \R^d$
and edge set $E$ is
\begin{equation}
    \asf(G) = \binom{n}{2}^{-1}\sum_{\{u,w\}\in\binom{V}{2}}\frac{\|uw\|_G}{\|uw\|} \eqlabel{asf}
\end{equation}
where $\|uw\|$ denotes the Euclidean distance between $u$ and $w$
and $\|uw\|_G$ denotes the cost of the shortest path from $u$ to $w$
in the graph $G$, where each edge of $G$ is weighted by the Euclidean
distance between its endpoints.  (Here, and forever, $n=|V|$.)

The related notion of \emph{worst-case
stretch factor}, defined as
\[
    \strf(G) = \max_{\{u,w\}\in\binom{V}{2}}\frac{\|uw\|_G}{\|uw\|}  \enspace ,
\]
has been studied extensively.  A graph, $G$, with $\strf(G)\le t$ is
called a \emph{$t$-spanner}.  The construction of and applications
of $t$-spanners is the subject of intensive research and there
is a book \cite{narasimhan.smid:geometric} and handbook chapter
\cite{eppstein:spanning} devoted to the topic.

The average or worst-case stretch factor of the complete graph
with vertex set $V\subset\R^d$ is 1 and, if $V$ contains no
collinear triples, then any graph on $V$ with fewer than $\binom{n}{2}$
edges will have (average and worst-case) stretch factor strictly greater
than 1.  At the other extreme, any connected graph with vertex set $V$
has at least $n-1$ edges.  Thus, there seems to be a tradeoff between
the following two requirements on a graph $G=(V,E)$ that is constructed
from a given point set, $V$:
\begin{enumerate}
  \item $G$ should be as sparse as possible, ideally $|E|\in O(n)$; and

  \item $G$ should have small (average or worst-case) stretch factor,
  ideally $\xsf(G)=1+o_n(1)$.
\end{enumerate}
In this paper we are interested in determining to what extent we can
simultaneously satisfy these two conflicting goals.

For worst-case stretch factor, it is not possible to achieve the preceding
two goals simultaneously: Elkin and Solomon \cite{elkin.solomon:steiner}
show that there exists point sets $V\subset\R^d$ such that any
graph $G=(V,E)$ having $\strf(G)\le 1+\eps$ has maximum degree
$\Omega(1/\eps^{d-1})$.  A variant of their argument shows that,
for some point sets, a constant fraction of vertices must have
degree $\Omega(1/\eps^{d-1})$ and therefore the graph must have
$\Omega(n/\eps^{d-1})$ edges \cite{solomon}. In short: achieving a
worst-case stretch factor of $1+o_n(1)$ requires $\omega(n)$ edges.



\subsection{New Results and Outline}

In this paper we show that, for average stretch factor, it \emph{is}
possible to simultaneously achieve both desired properties: For any
constant dimension, $d$, and any set $V\subset \R^d$, there exists
a geometric graph $G=(V,E)$ having $|E|\in O(n)$ edges and such that
$\asf(G)=1+o_n(1)$.  More precisely,
\[
   \asf(G)=1+O((\log n/n)^{1/(2d+1)}) \enspace .
\]
The proof of this result is in \secref{upper-bound} and constitutes the
bulk of the paper.  

In \secref{algorithms}, we show that graphs with small average stretch
factor can be constructed efficiently.  In particular, we present an
$O(n\log n)$ time Monte-Carlo algorithm that, with high probability,
constructs a graph with average stretch factor $1+o_n(1)$.  
In \secref{lower-bound} we present a simple lower-bound that shows our
upper-bound is at least of the right flavour:  For every positive integer
$n$, there exists an $n$ point set in $\R^2$ on which any graph with
$O(n)$ edges has average stretch factor $1+\Omega(1/\sqrt{n})$.
In \secref{discussion} we relate the parts of our construction to some
real-world networks and discuss directions for future work.

\subsection{Relation to Previous Work}


In the more general context of embedding metric spaces into ultrametrics,
Abraham \etal\ \cite{abraham.bartal.ea:metric,abraham.bartal.ea:embedding}
show that for any point set, $V$, there exists a spanning tree,
$T=(V,E)$, with $\asf(T)\in O(1)$.  Thus, it is always possible to
construct a very sparse graph with constant average stretch factor.
This contrasts sharply with worst-case stretch factor: Any spanning
tree on the vertices of a regular $n$-gon has worst-case stretch factor
$\Omega(n)$ \cite[Lemma~15]{eppstein:spanning-report}.

Aldous and Kendall \cite[Section~5.3]{aldous.kendall:short-length}
show that, for any well-distributed\footnote{For example, any
family of point sets that satisfies the quantitative equidistribution
condition \cite[Definition~3]{aldous.kendall:short-length}.} point set
$V\subset[0,\sqrt{n}]^2$ and any $\eps > 0$ there exists a Steiner
network, $N=(V',E)$, with $V'\supseteq V$, of total edge length
$\sum_{uw\in E}\|uw\| \le (1+\eps)\msst(V)$ and for which,
\[
    \binom{n}{2}^{-1}\sum_{\{u,w\}\in \binom{V}{2}}\frac{\|uw\|_N}{\|uw\|} = 1 + O((\log n/n)^{1/3}) .
\]
Here $\msst(V)$ denotes the length of the minimum Steiner spanning tree
of the points in $V$.  Thus $N$ is a graph that is only slightly longer
than the minimum length connected graph that contains $V$ and the average
stretch factor of $N$ (taken over pairs in $V$) tends to 1.  In this
construction, we call the points in $V'\setminus V$ \emph{Steiner points}.

In the work of Aldous and Kendall, which was the starting point for
the current work, the authors focus on finding a \emph{light} network;
one whose total edge length is small.  In the current paper, we focus
instead on finding a \emph{sparse} network; one with a small number
of edges.  With this shift of focus in mind, Aldous and Kendall's
work immediately raises three questions: (1)~Can a similar result be
proven without making any form of ``well-distributed'' assumption on the
points in $V$? (2)~Can a similar result be proven without using Steiner
points? (3)~Can a similar result be proven for point sets in $\R^d$?
Our results answer all three of these questions in the affirmative.

We note that it does not seem easy to answer any of the preceding
questions using a modification of Aldous and Kendall's construction.
(1)~Their proof uses the well-distributed assumption to argue that
most pairs of points are at distance at least $n^{\gamma}$, for any
$\gamma<1/2$ \cite[Section~5.3]{aldous.kendall:short-length}.  (2)~their
construction consists of a minimum Steiner spanning tree of $V$,
some additional random line segments, and some additional segments
that form a grid.  Anywhere two segments cross, a vertex is added
to $V'$, so this construction makes essential use of Steiner points.
(3)~The main technical tool used in their proof is a new result on the
lengths of boundaries of certain cells in arrangements of random lines
\cite[Theorems~3 and 4]{aldous.kendall:short-length}.  In dimensions
greater than 2, arrangements of lines do not decompose space into cells,
so it seems difficult to generalize this result to higher dimensions.

\section{The Construction}
\seclabel{upper-bound}

Our construction of a good average stretch factor graph, $G=(V,E)$,
makes use of a clustering of the points of $V$ into $O(n/k)$ clusters,
each of size at most $k$, that we call a $k$-partition.  In the next
subsection, we define $k$-partitions and show how to compute them.
In the subsequent subsection we show how to construct the graph $G$.


\subsection{$k$-Partitions}

We make use of the following construct:  A \emph{$k$-partition} of a
set $V$ of $n$ points in $\R^d$ consists of a set, $D$, of balls and an
assignment $f:V\to D$ such that
\begin{enumerate}
  \item $|D|\in O(n/k)$;
  \item for each $u\in V$, $u\in f(u)$ (i.e., $u$ is assigned to a
    ball that contains $u$);
  \item for each $\Delta\in D$, $|\{u\in V: f(u)=\Delta\}|\le k$ (i.e.,
   at most $k$ points are assigned to each ball);
  \item for every $r> 0$ and $p\in\R^d$, the number of balls
   in $D$ whose radius is in the range $[r,2r)$ and that contain $p$
   is $O(1)$; and
  \item for every $r\ge 0$ and every ball, $B$, of radius $r$, 
   \[
      |\{ u\in V : u\in B\text{ and } \radius(f(u))\ge r\}| \in O(k)
   \] 
   (i.e., there are only $O(k)$ points
   of $V$ that are in $B$ and that are assigned to balls of radius at
   least $r$).
\end{enumerate}

Note that, aside from Properties~4 and 5, there is very little structure
to the balls in $D$. In particular, balls in $D$ may overlap and may
even contain each other.  

\begin{lem}\lemlabel{k-partition}
  For any constant dimension, $d$, and any set $V$ of $n$ points in
  $\R^d$, a $k$-partition of $V$ exists and can be computed in $O(n\log
  n)$ time.
\end{lem}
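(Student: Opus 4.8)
The plan is to build a compressed quadtree on $V$ and obtain $D$ from a single bottom-up sweep that greedily batches points into groups of a fixed size $k'=\lceil k/2\rceil$ (note $k'\le k$ and $k'\ge 1$ for every $k\ge 1$).

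First I would rescale so that $V\subseteq[0,1]^d$ and build a compressed quadtree in $O(n\log n)$ time; it has $O(n)$ nodes, every leaf holds a single point, and every internal node is a dyadic cell with between $2$ and $2^d$ nonempty children. The preliminary observation that makes everything work is that, for this particular construction, the compressed tree behaves exactly like the (possibly enormous) uncompressed quadtree: along a compressed edge every skipped cell has a single nonempty child, so it only ever relays fewer than $k'$ points upward and no group is ever created there. Hence I can run the algorithm on the $O(n)$-node compressed tree while analysing it as though every dyadic cell were present.

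The sweep itself: process the cells in postorder; each leaf starts with its one point marked residual and each internal cell inherits the concatenation of its children's residual lists; if a cell's list has $m\ge k'$ entries, peel off $\lfloor m/k'\rfloor$ groups of exactly $k'$ points, assign each group to a slightly perturbed copy of the circumscribed ball of that cell, and relay the remaining $m\bmod k'<k'$ points to the parent; finally the root turns its leftover list into one last, possibly under-full, group. With doubly linked lists the sweep takes $O(n)$ time, so the total running time is $O(n\log n)$. Properties~1--3 are then easy: every group has at most $k'\le k$ points (Property~3); every point lies in the cell, hence in the ball, it is assigned to, and the perturbation preserves this while making all balls distinct and each ball the home of a single group (Property~2 and the membership part of Property~3); and since all but one group is exactly full there are at most $n/k'+1=O(n/k)$ groups, each contributing one ball (Property~1).

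Properties~4 and~5 are the crux. For Property~4, a radius window $[r,2r)$ meets only $O(1)$ dyadic scales, the cells of a fixed scale tile $\R^d$, and the circumscribed balls of same-scale cells have $\Theta(r)$-separated centres and radius $\Theta(r)$, so a given point lies in only $O(1)$ of them; multiplying by the at most $2^d$ balls spawned per cell gives $O(1)$. For Property~5, the point of the sweep is that a point assigned a ball of radius $\ge r$ must have remained residual until the sweep reached a scale $s^*=\Theta(r)$, and at that instant each dyadic cell of side $s^*$ holds fewer than $2^dk'$ residual points, since each of its $2^d$ children relayed up fewer than $k'$; as a ball $B$ of radius $r$ meets only $O(1)$ cells of side $s^*$, at most $O(k)$ of the points in $B$ can carry a ball of radius $\ge r$. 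I expect Property~5 to be the main obstacle: making it rigorous means identifying, in the uncompressed picture, the unique scale-$s^*$ cell that each surviving point ``sits at'', and checking that the $<2^dk'$ residual bound is unaffected both by the compression and by the (arbitrary) choice of which $k'$ points are peeled off at each cell.
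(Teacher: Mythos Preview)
Your construction is correct and takes a genuinely different route from the paper. The paper builds a binary \emph{fair-split tree} on $V$ and then removes $O(n/k)$ edges so that the tree breaks into $O(n/k)$ subtrees, each with at most $k$ leaves; the ball $\Delta_i$ is the circumscribed ball of the bounding box at the root of the $i$th subtree. Properties~4 and~5 are then derived from a box-packing lemma for fair-split trees (\cite[Lemma~9.4.3]{narasimhan.smid:geometric}); Property~5 in particular needs a somewhat delicate case analysis tracking how $C\cap B(w_i)$ grows along each of $O(1)$ root-to-leaf paths.

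Your compressed-quadtree-plus-greedy-batching approach trades the data-adaptive fair-split boxes for the rigid dyadic grid, and this is precisely what makes Properties~4 and~5 shorter: same-scale quadtree cells tile space, so the packing for Property~4 is a one-line volume argument, and for Property~5 the invariant that every scale-$s^*$ cell holds fewer than $2^dk'$ residuals is exactly what the greedy peeling guarantees. The one rough edge is the perturbation you need when a single cell spawns several groups (the paper avoids this because distinct subtree roots have distinct bounding boxes); you could sidestep it by having each cell emit at most one group---assign \emph{all} residuals to the circumscribed ball whenever there are at least $k'$ of them, with $k'=\max\{1,\lfloor k/2^d\rfloor\}$---which keeps every group below $2^dk'\le k$ points and still gives $O(n/k)$ balls. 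Both constructions run in $O(n\log n)$ time.
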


\begin{proof}
  We construct a $k$-partition using the binary \emph{fair-split
  tree}, $T=T(V)$, which is defined recursively as follows
  \cite{callahan.kosaraju:decomposition}: If $V$ consists of a single
  point, $u$, then $T$ contains a single node corresponding to $u$.
  Otherwise, consider the minimal axis-aligned bounding box, $B(V)$,
  that contains $V$.  The root of $T$ corresponds to $B(V)$ and this
  box is split into two boxes $B_1(V)$ and $B_2(V)$ by cutting $B(V)$
  with a hyperplane in the middle of its longest side.  The left and
  right subtrees of the root are defined recursively by constructing
  fair-split trees for $B_1(V)\cap V$ and $B_2(V)\cap V$. See \figref{fst}.

  \begin{figure}
    \begin{center}
      \includegraphics{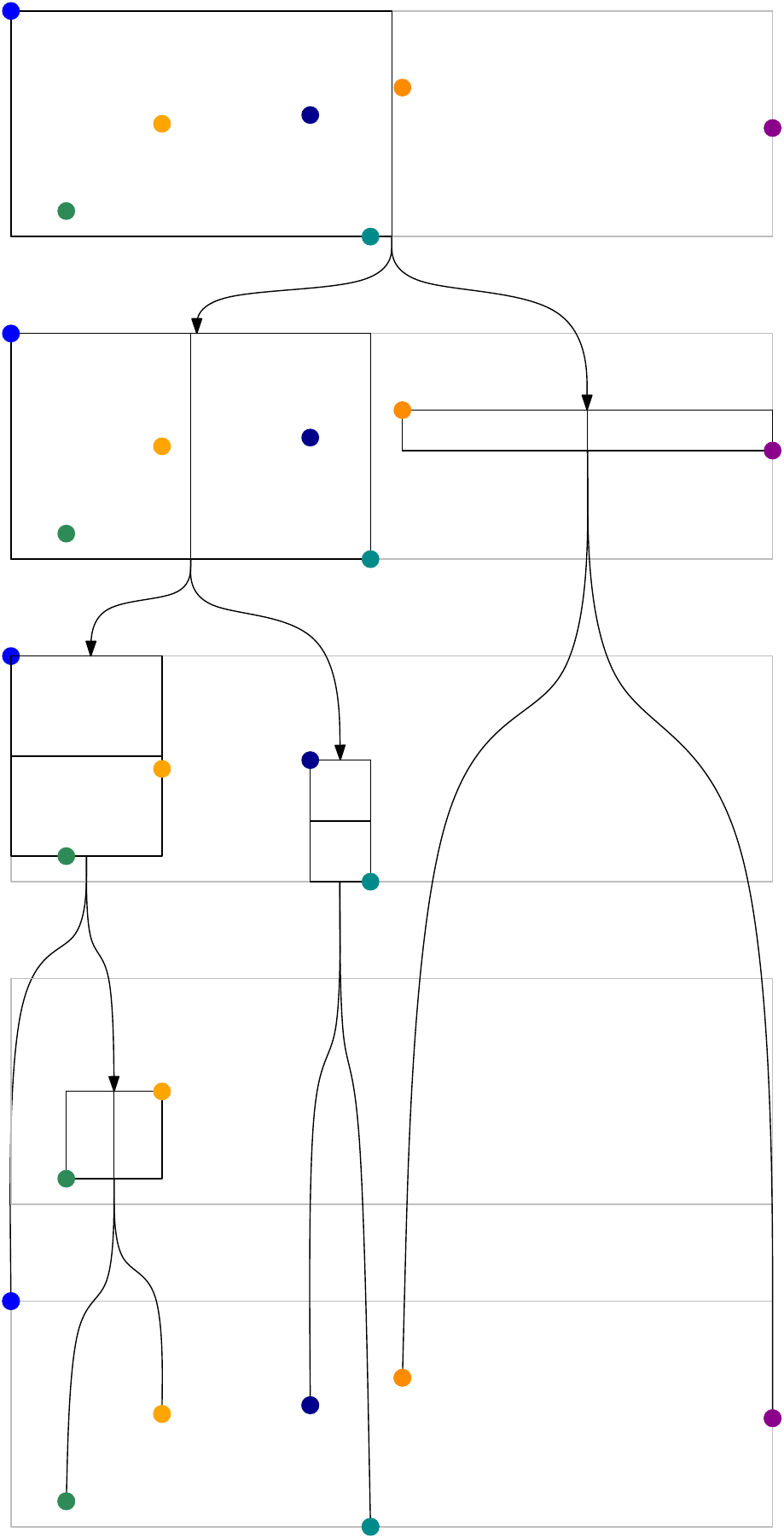}
    \end{center}
    \caption{A fair-split tree for $V$ repeatedly splits the bounding
      box $B(V)$ in the middle of its longest side.}
    \figlabel{fst}
  \end{figure}

  \paragraph{The $k$-Partition.}
  For each node, $u$, of $T$ there is a naturally defined subset
  $V(u)\subseteq V$ of points associated with $u$ as well as a bounding
  box $B(u)=B(V(u))$.  Since $T$ is a binary tree with $2n-1$ nodes, it
  has a set of $t-1$ edges whose removal partitions the vertices of $T$
  into $t\in O(n/k)$ maximally-connected components $C_1,\ldots,C_t$,
  each having at most $k$ vertices.

  For each $i\in\{1,\ldots,t\}$, let $u_i$ denote the node in $C_i$ of
  minimum depth, and call $u_i$, the \emph{root} of $C_i$.
  To obtain the balls, $\Delta_1,\ldots,\Delta_t$, of the $k$-partition we
  take, for each $i\in\{1,\ldots,t\}$ the smallest ball, $\Delta_i$ that
  contains $B(u_i)$.  For the mapping $f$, we map the point associated
  with each leaf, $w$, of $T$ to the unique ball $\Delta_i$, where $C_i$
  contains $w$. See \figref{fst-2}.

  \begin{figure}
    \begin{center}
      \includegraphics{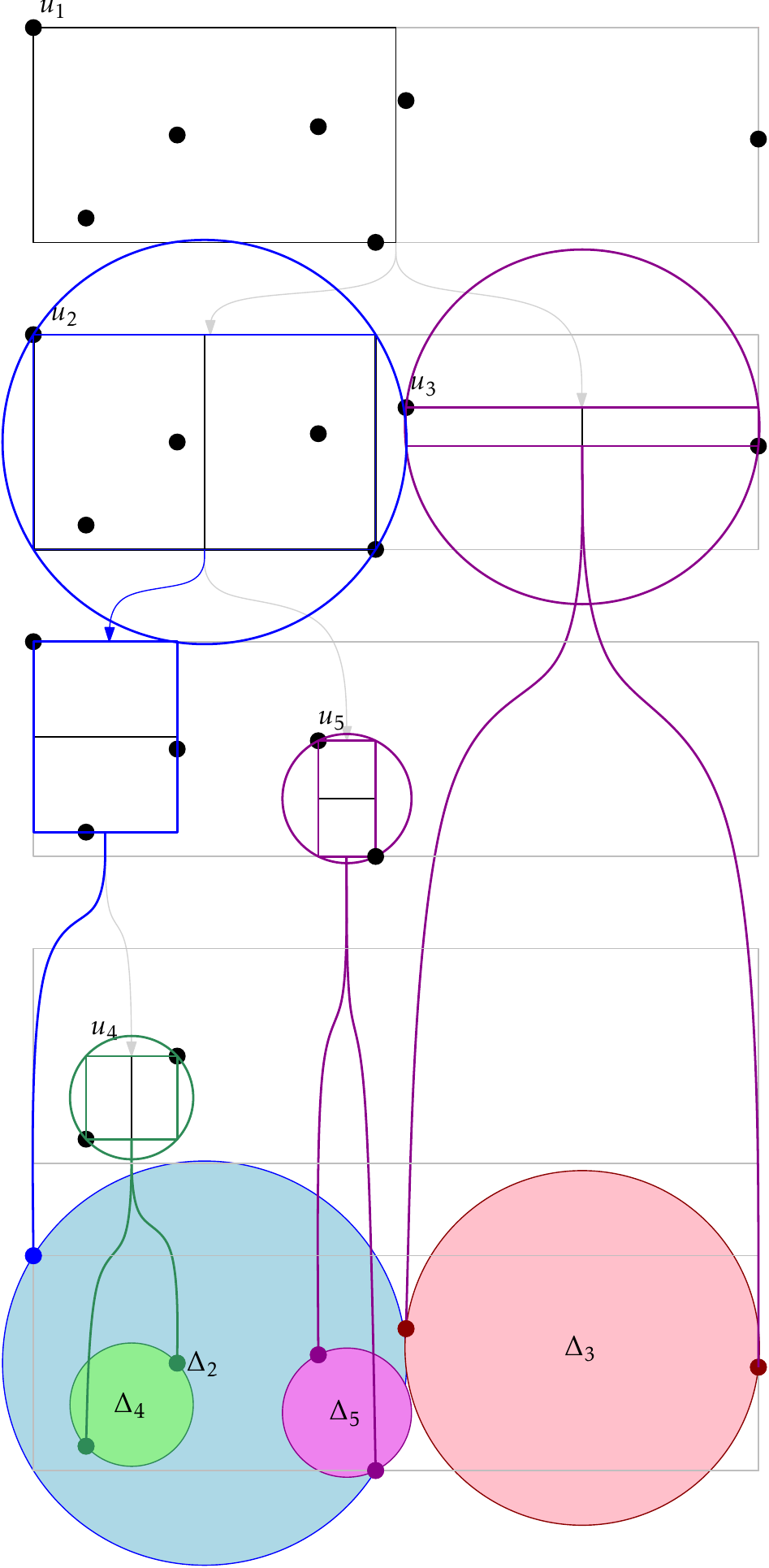}
    \end{center}
    \caption{The fair-split tree is partitioned into subtrees of size
    $k$ (${}=3$) by removing $O(n/k)$ edges.  The root, $u_i$, of each subtree
    defines a ball, $\Delta_i$, in the $k$-partition. (The ball $\Delta_1$
    is omitted from this figure.)}
    \figlabel{fst-2}
  \end{figure}

  The fair-split tree, $T$, and the boxes, $B(u)$, associated
  with each node, $u$, of $T$ can be computed in $O(n\log n)$ time
  \cite{callahan.kosaraju:decomposition}.  The partition of the vertices
  of $T$ into components $C_1,\ldots,C_t$ can easily be done in $O(n\log
  n)$ time by repeatedly finding an edge of a component of size $k'>k$
  whose removal partitions that component into two pieces each of size
  at most $\lceil 2k'/3\rceil$.  Thus, the construction of $D$ and $f$
  can be accomplished in $O(n\log n)$ time.

  The set of balls $D=\{\Delta_1,\ldots,\Delta_t\}$ and the mapping
  $f:V\to D$ described in the preceding paragraphs clearly satisfy
  Properties~1--3 in the definition of a $k$-partition.  What remains
  is to show that they also satisfy Properties~4 and 5. 

  For a node $u$ in $T$, with $B(u)=[a_1,b_1]\times\cdots\times[a_d,b_d]$,
  define $L_i(u)=b_i-a_i$ and let $L(u)=\max\{L_i(i):i\in\{1,\ldots,d\}\}$
  denote the length of $B(u)$'s longest side.   To establish
  Properties~4 and 5, we make use of the following result on fair-split
  trees \cite[Lemma~9.4.3]{narasimhan.smid:geometric}:

  \begin{lem}\lemlabel{box-packing}
     Let $C$ be a box whose longest side has length $\ell$ and let
     $\alpha >0$ be any positive real number.  Let $w_1,\ldots,w_s$
     be some nodes of a fair-split tree, $T$, such that
     \begin{enumerate}
       \item the sets $V(w_i)$ are pairwise disjoint, for all $i\in\{1,\ldots,s\}$;
       \item $L(w_i)\ge \ell/\alpha$, for all
          $i\in\{1,\ldots,s\}$;\footnote{The original lemma
          \cite[Lemma~9.4.3]{narasimhan.smid:geometric} is slightly
          stronger in that it only requires that $L(w_i')\ge \ell/\alpha$,
          where $w_i'$ is the parent of $w_i$.} and
       \item $B(w_i)$ intersects $C$, for all $i\in\{1,\ldots,s\}$.
     \end{enumerate}
     Then $s\le (2\alpha + 2)^d$.
  \end{lem}

  \paragraph{Property 4.}
  To prove that the balls in $D$ satisfy Property~4, let
  $\{\Delta_{i_1},\ldots,\Delta_{i_q}\}\subseteq D$ be the subset of
  balls in $D$ having radii in the interval $[r,2r)$ and that all contain
  some common point, $p\in\R^d$.   Then each such ball, $\Delta_{i_j}$
  corresponds to a node $u_{i_j}$ of $T$ such that
  \begin{equation}
        \frac{2r}{\sqrt{d}} \le L(u_{i_j}) \le 4r \enspace . \eqlabel{bounds}
  \end{equation}
  Each box, $B(u_{i_j})$, intersects the ball of radius $2r$ centered at
  $p$.  (Indeed, the center of $B(u_{i_j})$ is contained in this ball.)
  Therefore, each box $B(u_{i_j})$ intersects the box, $C$, of side-length
  $4r$ centered at $p$.

  We are almost ready to apply \lemref{box-packing} to
  $C$---whose side length is $\ell = 4r$---and the vertex set
  $w_1,\ldots,w_q=u_{i_1},\ldots,u_{i_q}$. For each $j\in\{1,\ldots,q\}$,
  $B(u_{i_j})$ intersects $C$, so Condition~3 of \lemref{box-packing}
  is satisfied.  Furthermore, \eqref{bounds} states that $L(u_{i_j})\ge
  2r/\sqrt{d} = \ell/(2\sqrt{d})$, so Condition~2 of \lemref{box-packing}
  is satisfied with $\alpha=2\sqrt{d}$.  Unfortunately, there is still
  a little more work to do since the nodes $u_{i_1},\ldots,u_{i_t}$
  do not necessarily satisfy Condition~1 of \lemref{box-packing}.

  To proceed, we partition $u_{i_1},\ldots,u_{i_q}$ into a small number of
  subsets, each of which satisfies Condition~1 of \lemref{box-packing}.
  Observe that Condition~1 of \lemref{box-packing} is equivalent
  to the statement that no $w_i$ is an ancestor of $w_j$ for any
  $\{i,j\}\subseteq\{1,\ldots,s\}$.  A key observation is that, if $u$
  is an ancestor of $w$ in a fair-split tree, $T$, and the difference
  in depth between $u$ and $w$ is at least $d$, then
  \[
      L(u) \ge 2L(w) \enspace .
  \]
  This, and \eqref{bounds}, implies that, if $u_{i_j}$ is an ancestor
  of $u_{i_{j'}}$ then
  \[
     \depth(u_{i_{j'}})-\depth(u_{i_{j}}) \le d\log(2\sqrt{d}) \enspace .
  \]
  Thus, we can partition $u_{i_1},\ldots,u_{i_q}$ into $z=\lceil
  d\log(2\sqrt{d})\rceil$ subsets, $S_0,\ldots,S_{z-1}$, each of which
  satisfies Condition~1 of \lemref{box-packing}, by assigning $u_{i_j}$
  to the subset $S_{\depth(u_{i_j})\bmod z}$.  

  Now, \lemref{box-packing} implies that, for each $i\in\{0,\ldots,z-1\}$, 
  \[
     |S_i|\le (4\sqrt{d}+2)^d
  \]
  so that
  \[
     q = \sum_{i=0}^{z-1}|S_i|\le (4\sqrt{d}+2)^dz 
       = (4\sqrt{d}+2)^d\lceil d\log(2\sqrt{d})\rceil  
       \in O(1) \enspace .
  \]
  Thus, for any point $p\in\R^d$, the set of balls in $D$ whose radius
  is in the interval $[r,2r)$ and that contain $p$ has size $O(1)$.
  Therefore the balls in $D=\{\Delta_1,\ldots,\Delta_t\}$ satisfy
  Property~4 in the definition of a $k$-partition.

  \paragraph{Property 5.}
  To study Property~5, it is easier to work with the bounding boxes,
  $B(u)$, associated with each node, $u$, in the fair-split tree as
  well as the box, $C$, of side length $2r$ that contains the ball $B$.
  See \figref{property-5}.  Observe that if some ball, $\Delta_i$,
  is assigned a point in $B$, then the box $B(u_i)$ intersects $C$.
  Thus, we need only consider the set $U\subseteq\{u_1,\ldots,u_t\}$
  that contains only those nodes $u_i$ such that $\radius(\Delta_i)\ge r$
  and $B(u_i)$ intersects $C$.

  \begin{figure}
    \begin{center}
      \includegraphics{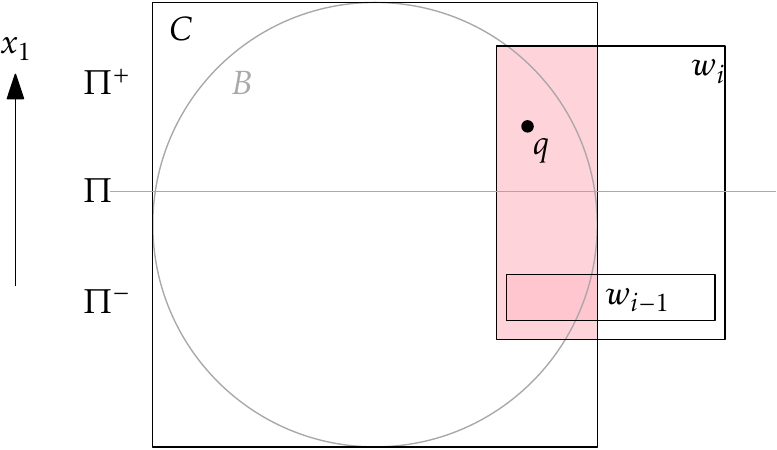}
    \end{center}
    \caption{Proving Property~5 of $k$-partitions.}
    \figlabel{property-5}
  \end{figure}

  For each $u\in U$, \eqref{bounds} implies that $L(u)\ge 2r/\sqrt{d}$.
  Therefore, by \lemref{box-packing}, $U$ contains a subset, $U'$, of
  size at most $(4\sqrt{d}+2)^d\in O(1)$ such that every node in $U$
  is an ancestor of some node in $U'$.  Thus, the elements of $U$ can be
  covered by $O(1)$ paths, each of which goes from a node in $U'$ to the 
  root of $T$.  It suffices to consider the contribution of one such
  path, $w_1,\ldots,w_\ell$, where $w_1\in U'$ and $w_\ell$ is the root
  of $T$.

  For each $i\in\{1,\ldots,\ell\}$, let $C_{i}$, denote the box $C\cap
  B(w_{i})$.  Since $B(w_1)\subset\cdots\subset B(w_\ell)$, we have that
  $C_1\subseteq\cdots\subseteq C_\ell$. Observe that, for each $w_i$, the
  ball associated with $w_i$ is not assigned any points in $B(w_{i-1})$.
  Thus, it is sufficient to show that there are $O(1)$ values of $i$
  for which $V\cap C_i\neq V\cap C_{i-1}$; for each such $i$, the number
  of elements assigned to the corresponding ball of the $k$-partition
  is at most $k$.

  We extend the side-length notation, $L_i$, to any box,
  $B=[a_1,b_1]\times\cdots[a_d,b_d]$, so that $L_i(B)=|b_i-a_i|$ and
  define the \emph{total side length} $\sum_{i=1}^d L_i(B)$.  We will
  show that, for each $i\in\{2,\ldots,\ell\}$, at least one of the
  following statements is true
  \begin{enumerate}
    \item $V\cap C_i = V\cap C_{i-1}$; 
    \item the total side length of $C_i$ exceeds that of $C_{i-1}$ by at least
      $L(w_1)/2$; or
    \item $C_i$ intersects a side of $C$ that is not intersected by $C_{i-1}$.
  \end{enumerate}
  This is sufficient to prove the result since Case~1 does not result in
  any new points included in $B$, Case~2 can occur at most $4rd/L(w_1)\in
  O(1)$ times, and Case~3 can occur at most $2d\in O(1)$ times.

  To see why one of the preceding cases must occur, suppose that neither
  Case~1 nor Case~3 applies.  Since Case~1 does not apply, there is
  some point $q\in V\cap C_i$ that is not in $C_{i-1}$. 
  Without loss of generality, assume that the fair-split tree cuts
  $B(w_i)$ with a plane, $\Pi$, that is perpendicular to the $x_1$-axis.
  Let $\Pi^+$ and $\Pi^-$ denote the closed halfspaces bounded by $\Pi$
  that contain $q$ and $B(w_{i-1})$, respectively.  Then we have that
  \[
      L_1(B(w_i)\cap \Pi^+) 
          = L_1(w_i)/2 
          \ge L(w_{i-1})/2 
          \ge L(w_{1})/2  \enspace .
  \]
  Observe that $B(w_{i})$ does not intersect the side of $C$ that is
  parallel to $\Pi$ and contained in $\Pi^+$ (since, otherwise, Case~3
  would apply).  This implies that
  \[
      L_1(C_i) \ge L_1(C_{i-1}) + L(w_1)/2 \enspace .
  \] 
  Thus, if neither Case~1 nor Case~3 applies to $u_i$, then Case~2
  applies.  This completes the proof.
\end{proof}

\subsection{The Graph $G$}

With the availability of $k$-partitions, we are now ready to construct
a graph $G$ with low average stretch factor.  In the following construction,
positive valued variables $c,k\in\omega_n(1)$ and $\epsilon\in o_n(1)$
are used without being specified.  Values of these variables that
optimize the average stretch factor of $G$ will be given in the proof
of \thmref{upper-bound}.  In the meantime, the reader can mentally assign the
values $c=k=\log n$ and $\epsilon = 1/\log n$, which are sufficient to
prove that $\asf(G)=1+o_n(1)$.

\paragraph{Hubs.}

We begin with a $k$-partition $(\{\Delta_1,\ldots,\Delta_{n'}\},f)$
of $V$.  For each $i\in\{1,\ldots,n'\}$, let $r_i$ denote the radius of
$\Delta_i$.  We will use the convention that $\Delta_1,\ldots,\Delta_{n'}$
are ordered by increasing radii, so that $r_i \le r_j$ for each $1\le
i < j\le n'$.  For each $i\in \{1,\ldots,n'\}$, let $V_i=\{u\in V :
f(u)=\Delta_i\}$; that is, $V_i$ is the set of points assigned to the
ball $\Delta_i$.  For each set $V_i$, we choose a \emph{hub}, $u_i\in
V_i$, arbitrarily. Let $H=\{u_1,\ldots,u_{n'}\}$ denote the set of hubs
and recall that $|H|=n'\in O(n/k)$.

\begin{figure}
  \begin{center} 
    \includegraphics{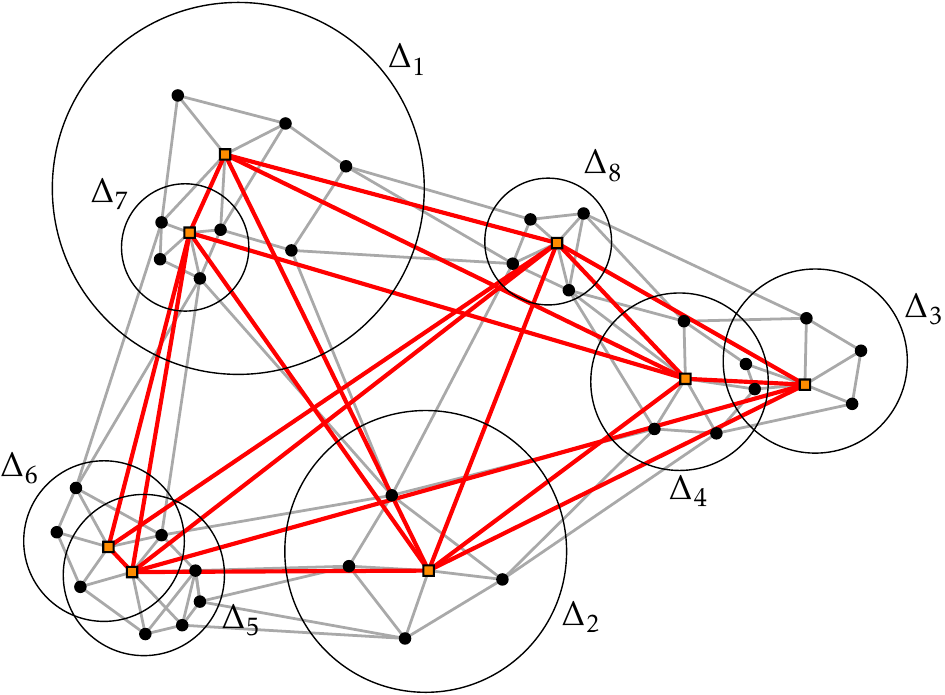}
  \end{center} 
  \caption{$G$ contains a 2-spanner of $V$ (gray edges) as well as
    $O(n/k)$ hubs whose centers are interconnected by a
    $(1+1/k^{1/(d-1)})$-spanner (red edges)}
  \figlabel{overview}
\end{figure}

\paragraph{Roads and Highways.}

Our graph $G$ starts with two spanner constructions.
The first spanner (the roads), denoted by
$G_2=(V,E_2)$, is a 2-spanner of $V$, and has $O(n)$ edges
\cite{callahan.kosaraju:faster,salowe:constructing,vaidya:sparse}.
The next spanner (the highways), denoted by $G_1=(H,E_1)$, is a
$(1+1/k^{1/(d-1)})$-spanner of $H$, and has $O(k|H|)=O(n)$ edges
\cite{carmi.smid:optimal,ruppert.seidel:approximating}.

With $G_1$ and $G_2$ we have, for any $u,w\in V$:
\[
   \frac{\|uw\|_G}{\|uw\|} \le \begin{cases}
         1+1/k^{1/(d-1)} & \text{if $u,w\in H$ (by using $G_1$)} \\
         2 & \text{in any case (by using $G_2$).}
       \end{cases}
\]

\paragraph{Covering a Nearby Cluster.}

Informally, the idea behind the graphs $G_1$ and $G_2$ is that, if
$u\in V_i$ and $w\in V_j$ are ``far apart'' (relative to $r_i$
and $r_j$), then the path from $u$ to $w$ that goes from $u$ to $u_i$
via roads ($G_2$), then to $u_j$ via highways ($G_1$), and then onto
$w$ via roads again should have length $(1+o_n(1))\|uw\|$.  Of course,
this only works if $u$ and $w$ are far apart.  The final step of
our construction attempts to deal with the majority of cases where $u$
and $w$ are not far apart.

To make the preceding ideas precise, let $D_i$ be the ball centered at
the center of $\Delta_i$ and having radius $cr_i$.  Points of $V$ that
are in $D_i$ can be problematic for $V_i$; there is no guarantee that such
points have paths with $1+o_n(1)$ stretch factor to the points in $V_i$.

For each $i\in\{1,\ldots,n'\}$, we find a ball, $E_i$, of radius $r_i/c$,
that intersects $D_i$, and that contains the maximum number of points
of $V$.  (Note that this may include points of $V$ in $V_i$ or outside
of $D_i$.)  We then add edges joining each of the points in $V_i$ to
a carefully chosen point $w_i\in E_i$.  See \figref{hitter}.

The point $w_i$ is chosen as follows: For each point $w\in V$, let
$i(w)\in \{1,\ldots,n'\}$ denote the smallest index such that $w\in
E_{i(w)}$ and $|E_{i(w)}\cap V| \ge \epsilon n$; if no such index exists,
let $i(w)=\infty$.  The point $w_i\in E_i$ is selected to be any of the
points in $E_i$ that minimizes $i(w)$.  This concludes the description
of the graph $G$.

\begin{figure}
  \begin{center}
    \includegraphics{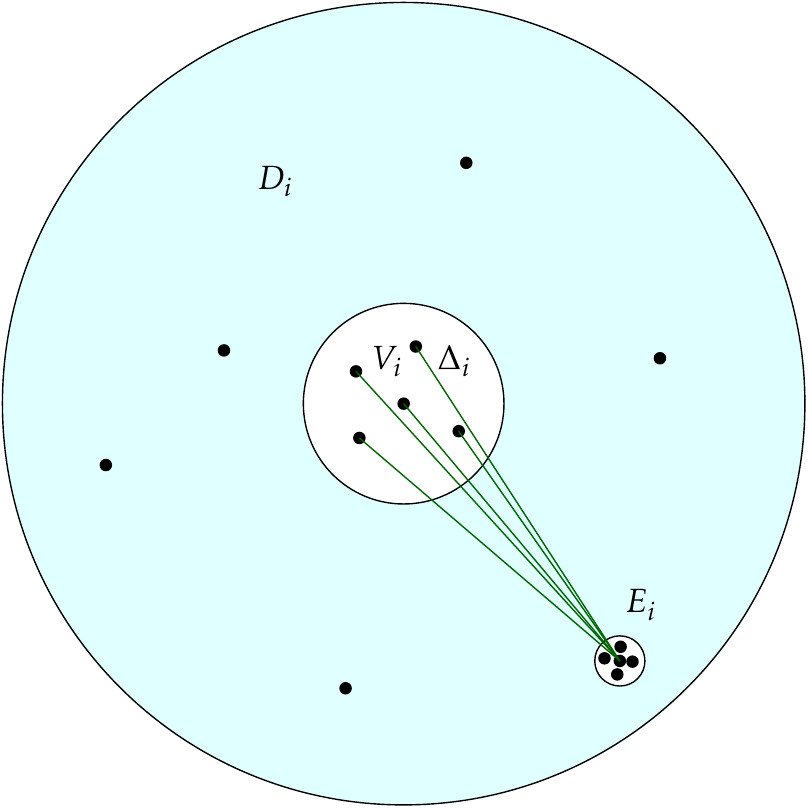}
  \end{center}
  \caption{The ball $E_i$ captures as many points of $V$ as possible
   while still intersecting $D_i$.}
  \figlabel{hitter}
\end{figure}

\subsection{Two Illustrative Examples}
Before delving into the proof that $G$ has low average stretch factor,
it may be helpful to study two examples that illustrate why the balls
$E_1,\ldots,E_{n'}$ are needed and why the choice of the representative
vertices, $w_i\in E_i$, is  important.

\paragraph{Example 1: Exponential Grids.}
The first example is a set of points arranged as a sequence of
$\sqrt{k}\times\sqrt{k}$ grids, $G_0,\ldots,G_{n/k-1}$.  The grid $G_i$
has its center on the x-axis at x-coordinate $2^{i+1}-1$, and has side
length $2^i$ (see \figref{grid}). The natural $k$-partition of this
grid is the one that assigns all points in each $G_i$ to a single ball,
$\Delta_i$.  In this grid, if we consider $G_i$, for some large value
of $i$, we see that all the points in $G_0,\ldots,G_{i-1}$ are within
distance $O(2^{i})$ of all the points in $G_i$.  Forcing every path from
every $u\in G_i$ to every $w\in G_{j}$, $j<i$, to go through a central
vertex $u_i\in G_i$ would be too costly; on average the detour through
$u_i$ would increase the length of this path by $\Omega(2^{i})$.

\begin{figure}
  \begin{center}
    \includegraphics{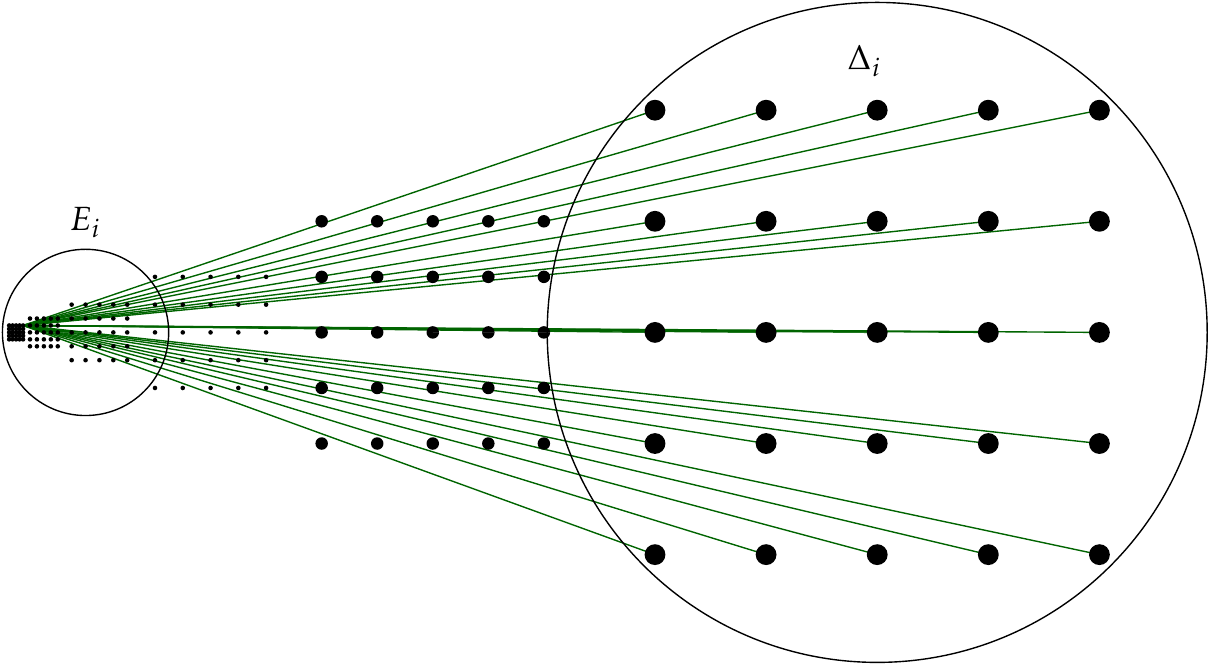}
  \end{center}
  \caption{A sequence of exponentially increasing grids illustrates the
   need for connecting all points in $V_i$ to some point in $E_i$.}
  \figlabel{grid}
\end{figure}

The ball $E_i$ solves the preceding problem; $E_i$ is large enough
to cover all points in $G_0,\ldots,G_{i-\Theta(\log c)}$.  The path
from $u\in G_i$ directly to $w_i\in E_i$ and then to any $w\in E_i$
has length at most
\[
    \|uw\| + O(2^{i}/c) \enspace .
\]
Furthermore, all of the points in $E_i$ are at distance $\Omega(2^i)$ from
all the points in $G_i$, so $\|uw\|_G/\|uw\| = 1+O(1/c)$.  Part~3 of the
proof of \thmref{upper-bound} is dedicated to showing that, in general,
the balls $E_1,\ldots,E_{n'}$ help with many pairs of points that would
otherwise be problematic.

\paragraph{Example 2: The Importance of Choosing Wisely.}
Our second example is intended to illustrate the importance of carefully
choosing the representative vertex $w_i\in E_i$.   In this example, there
is a dense cluster of $n/2$ points that is contained in some ball $E_j$
(see \figref{example2}).  Consider now some $i>j$ such that $r_i$ is
much greater than $r_j$.  It is easy to make a configuration of points so
that, for some cluster $V_i$, the corresponding ball $E_i$ contains $E_j$
as well as a few other points that are far from $E_j$. If one of these
points is used as the representative vertex, $w_i$, then all $kn/2$
paths from some $u$ in $V_i$ to some $w\in E_j$ will have to make a
detour through $w_i$.  By repeating this for many different values of
$i$, this is enough to produce an average stretch factor significantly
larger than 1.

\begin{figure}
  \begin{center}
    \includegraphics{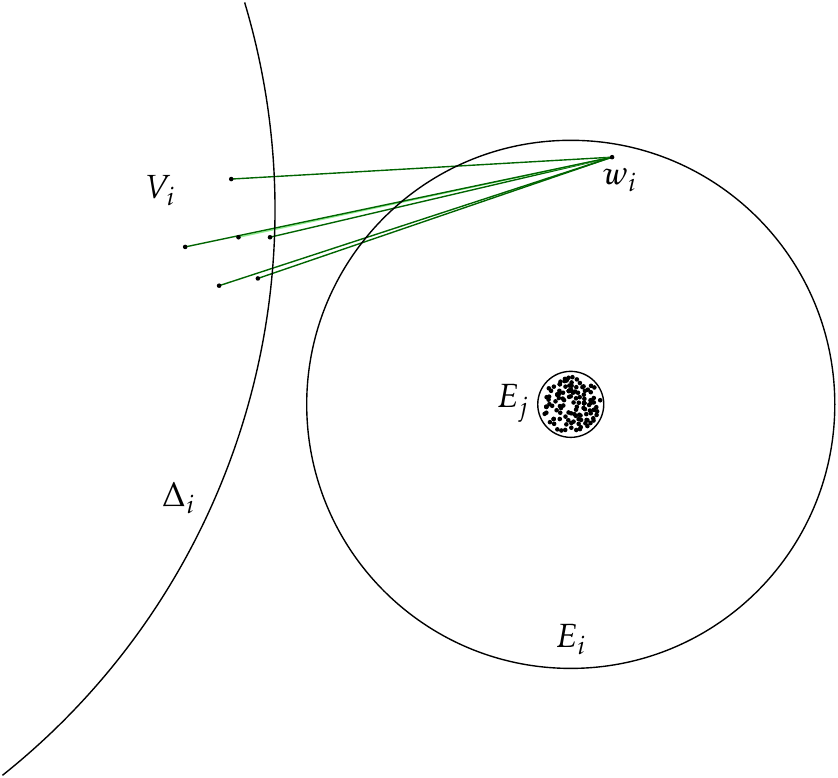}
  \end{center}
  \caption{An illustration of why it is important to choose $w_i$ carefully.
    A bad choice (like the one illustrated) leads to a significant detour
    on the paths from every $u\in V_i$ to every $w\in E_j$.}
  \figlabel{example2}
\end{figure}

The choice of $w_i$ is designed to avoid the preceding problem.  In this
example, $w_i$ would be chosen from the points in $E_j$, since $r_j <
r_i$ and $E_i$ contains $n/2\ge \epsilon n$ points of $V$.  Part~4 of
the proof of \thmref{upper-bound} is dedicated to showing this careful
choice of $w_i$ works.  In particular, it ensures that, for most pairs
of the form $u\in V_i$, $w\in E_i$,
\[
    \|uw_i\| + O(\|w_iw\|) = \|uw\|(1+O(1/c)) \enspace .
\]

\subsection{The Proof}

Without further ado, we now prove that $G$ has low average stretch factor.

\begin{thm}\thmlabel{upper-bound}
  For every constant dimension, $d$, and every set, $V$, of
  $n<\infty$ points in $\R^d$, the graph $G=(V,E)$ described above
  has $O(n)$ edges and $\asf(G)=1+o_n(1)$.  More precisely,
  $\asf(G)=1+O((\log n/n)^{1/(2d+1)})$.
\end{thm}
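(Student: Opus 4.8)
My plan is to prove the two assertions separately, the edge count being routine and the stretch bound being the real work. The edge bound is immediate: $G_2$ has $O(n)$ edges, $G_1$ has $O(k|H|)=O(n)$ edges since $|H|=n'\in O(n/k)$, and the cluster-covering step adds at most $\sum_{i=1}^{n'}|V_i|=n$ edges. For the stretch factor, I would write
\[
   \asf(G)-1=\binom{n}{2}^{-1}\sum_{\{u,w\}\in\binom{V}{2}}\left(\frac{\|uw\|_G}{\|uw\|}-1\right),
\]
and use that every term lies in $[0,1]$ because $G_2$ is a $2$-spanner. Thus it suffices to exhibit an ``exceptional'' set $\mathcal{B}\subseteq\binom{V}{2}$ with $|\mathcal{B}|\le\beta\binom{n}{2}$ such that every pair outside $\mathcal{B}$ has excess stretch at most $\gamma$; then $\asf(G)-1\le\beta+\gamma$, and the theorem follows by choosing $c,k,\epsilon$ to balance $\beta$ against $\gamma$.

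For a pair $\{u,w\}$ with $u\in V_i$, $w\in V_j$, and (without loss of generality) $r_i\le r_j$, I call the pair \emph{far} if $\|uw\|\ge c\,r_j$ and \emph{near} otherwise. For a far pair I would route $u$ to the hub $u_i$ through $G_2$, then $u_i$ to $u_j$ through $G_1$, then $u_j$ to $w$ through $G_2$; since $\|uu_i\|\le 2r_i$, $\|wu_j\|\le 2r_j$, and $\|u_iu_j\|\le\|uw\|+2r_i+2r_j$, the resulting path has length at most $(1+k^{-1/(d-1)})\|uw\|+O(r_i+r_j)$, hence excess stretch $O(k^{-1/(d-1)}+1/c)$. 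This disposes of all far pairs. For near pairs ($\|uw\|<c\,r_j$, so $u$ lies within $O(c\,r_j)$ of the centre of $\Delta_j$) the balls $E_i$ and the choice of $w_i$ do the work, with the two illustrative examples as the guide. The basic routing move is: for $u\in V_i$ and a point $w$ lying in some ball of radius at most $r_i/c$ that contains $w_i$ (for instance $w\in E_i$, or $w$ in the dense ball $E_{i(w_i)}$ through $w_i$), take the single edge $uw_i$ followed by a shortest path from $w_i$ to $w$ in $G_2$; this path has length at most $\|uw_i\|+2\|w_iw\|\le\|uw\|+6r_i/c$, so its excess stretch is $O(1/c)$ provided $\|uw\|\ge r_i/2$. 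The plan is to show that after placing into $\mathcal{B}$ the following three categories, every surviving near pair is handled by the basic move: (i)~all pairs with $\|uw\|<\min\{r_i,r_j\}/2$; (ii)~all near pairs attached to a cluster $\Delta_\ell$ with $|E_\ell\cap V|<\epsilon n$; and (iii)~all remaining near pairs in which $w$ is not captured by $E_{i(w_i)}$ through $w_i$ and (symmetrically) $u$ is not captured by $E_{i(w_j)}$ through $w_j$. Here one uses that $w_i$ was chosen inside the dense ball of \emph{smallest} index meeting $E_i$ — exactly what neutralises Example~2 — and that $E_i$, being the densest ball of radius $r_i/c$ meeting $D_i$, captures the worst concentration of nearby points, as in Example~1.

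It then remains to bound $|\mathcal{B}|$. Category~(i) I would handle via Property~5 of the $k$-partition in the following form: if $\|uw\|<\min\{r_i,r_j\}/2$ with $r_i\le r_j$, then $w\in B(u,r_i/2)$ and $w$ is assigned to a cluster of radius $\ge r_i>r_i/2$, so for each fixed $u$ there are only $O(k)$ choices of $w$, giving $|\text{(i)}|=O(nk)$. Category~(ii) I would handle by observing that a near pair attached to $\Delta_\ell$ forces its partner point into $D_\ell$, which (being a ball of radius $c\,r_\ell$) is covered by $O(c^{2d})$ balls of radius $r_\ell/c$ that each meet $D_\ell$, whence $|V\cap D_\ell|\le O(c^{2d})\,|E_\ell\cap V|<O(c^{2d}\epsilon n)$; summing over clusters gives $|\text{(ii)}|=O(c^{2d}\epsilon n^2)$. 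Category~(iii) is the delicate one: I would control it by combining Properties~4 and~5 (Property~4 with a packing argument to bound the clusters at any one dyadic radius-scale whose $O(c)$-expansion contains a given point, Property~5 applied with a \emph{small} radius to exploit that a nearby cluster that is not too small forces a point to sit among only $O(k)$ others) together with the density threshold $\epsilon n$, obtaining a bound of the form $O(k\,\mathrm{poly}(c)/n)\binom{n}{2}$ up to a logarithmic factor. Collecting everything, $\asf(G)-1\le\gamma+\beta$ with $\gamma=O(k^{-1/(d-1)}+1/c)$ and $\beta$ a sum of terms of the form $O(k\,\mathrm{poly}(c)/n)$ and $O(\epsilon\,\mathrm{poly}(c))$; choosing $\epsilon$ to absorb the polynomial-in-$c$ factor, then $k$ (as a polynomial in $n$) and $c$ to balance the per-pair slack against the exceptional fraction, yields $\asf(G)-1=O((\log n/n)^{1/(2d+1)})$, with the exponent $1/(2d+1)$ coming from the polynomial-in-$c$ growth of $\beta$. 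I expect the genuine difficulty to be category~(iii) and, more broadly, making the near-pair case analysis airtight, since a point can be ``near'' clusters across unboundedly many radius scales and Properties~4 and~5 must be used in tandem to avoid a bound that degrades with the spread.
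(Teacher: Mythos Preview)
Your high-level structure is right --- the edge count and the far-pair routing via hubs match the paper exactly, and framing the problem as ``exceptional set of size $\beta\binom{n}{2}$ plus per-pair slack $\gamma$'' is precisely how the paper proceeds. The real issue is that your Category~(iii) is doing essentially all of the work and you have not actually bounded it; you acknowledge this yourself. In the paper's decomposition, the near pairs (with the larger cluster indexed by $i$) are split according to whether $w\in D_i\setminus E_i$ (Type~3) or $w\in E_i\cap D_i$ (Type~4), and each of these requires a separate, fairly intricate contradiction argument that your sketch does not capture.

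For Type~3 the paper assumes $\sum_i|V\cap D_i\setminus E_i|\ge \delta n^2/k$ and derives a contradiction: by pigeonhole some $w^*\in V$ lies in $\Omega(\delta n/k)$ of the sets $D_i\setminus E_i$, each with $|V\cap D_i\setminus E_i|\in\Omega(\delta n)$. A packing argument using Property~4 shows only $O(c^d)$ such indices share any dyadic radius scale, so one can extract $h\in\Omega(\delta n/(kc^d\log c))$ indices whose radii grow geometrically by a factor at least $4c^2$. The maximality of each $E_{i_j}$ then forces $|V\cap(D_{i_j}\cup E_{i_j})|\ge j\cdot\Omega(\delta n)$, which exceeds $n$ once $h>O(1/\delta)$. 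This yields $\delta=O(\sqrt{kc^d\log c/n})$, i.e.\ $O(\sqrt{kn^3c^d\log c})$ Type~3 pairs --- a square-root bound, not the $O(k\,\mathrm{poly}(c)/n)\binom{n}{2}$ form you conjectured. For Type~4 the paper defines ``bad'' pairs by the failure of your basic move and shows there are at most $\epsilon n^2$ of them \emph{provided} $\epsilon\ge\Omega((k/n)^{1/3})$; this argument again pigeonholes to find $w^*$ in many $E_{i_j}$, then combines Property~5 (applied on a ball of radius $r_{i_{j^*}}$ about $w^*$) with the minimality of $i(w_i)$ --- the one place where the careful choice of $w_i$ is used --- to bound bad pairs in a window of indices. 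The lower bound $\epsilon\ge\Omega((k/n)^{1/3})$ is essential and prevents you from simply shrinking $\epsilon$ to absorb the $c^{2d}\epsilon$ term from your Category~(ii); in fact the paper never uses your covering argument at all, and the exponent $1/(2d+1)$ emerges from balancing $1/c$ against $\sqrt{kc^d\log c/n}$ with $k\sim c^{d-1}$, not from a $c^{2d}$ term.

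Two smaller gaps: your Category~(i) does not absorb all same-cluster pairs (those with $r_i/2\le\|uw\|\le 2r_i$ survive, and your basic move can fail on them since $w_i$ may sit $\Theta(cr_i)$ away), so you still need the trivial $O(nk)$ count for Type~1. And the ``symmetric'' route via $w_j$ rarely helps, since it needs $u$ within $O(r_j/c)$ of $w_j$ while $u$ ranges over a cluster of radius $r_i\ge r_j$.
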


\begin{proof}
  That $G$ has $O(n)$ edges follows immediately from its definition.

  To upper-bound the average stretch factor of $G$, there are four types
  of pairs of points, $u\in V_i$, $w\in V_j$, $j\le i$, to consider
  (recall that $\Delta_1,\ldots,\Delta_{n'}$ are ordered so that $r_j \le r_i$):
  \begin{enumerate}
    \item pairs that are both from the same set, i.e., where $i=j$;
    \item pairs for which $w$ is outside of $D_i$;
    \item pairs for which $w$ is contained in $D_i\setminus E_i$; and
    \item pairs for which $w$ is contained in $E_i\cap D_i$.
  \end{enumerate}
  We consider each of these types of pairs in turn.  Our strategy is to
  study the $\binom{n}{2}$ terms that define $\asf(G)$ in \eqref{asf}.
  We will show that $o(n^2)$ of these terms are at most 2 while the
  remaining terms are at most $1+o_n(1)$.  Thus,
  \[
     \asf(G)\le \binom{n}{2}^{-1}\left(2\cdot o(n^2)
                                       +\binom{n}{2}(1+o_n(1))\right)
     = 1+o_n(1) \enspace .
  \]

  \paragraph{Type~1 Pairs.}
  Each $V_i$, for $i\in\{1,\ldots,n'\}$, defines at most $\binom{k}{2}$
  Type~1 pairs, so the total number of Type~1 pairs that contribute a
  term to the sum in \eqref{asf} is at most
  \[
    \binom{k}{2}\cdot O(n/k) \in O(nk)
      \enspace .
  \]

  \paragraph{Type~2 Pairs.}
  For each Type~2 pair $u\in V_i$, $w\in V\setminus D_i$, there is a
  path from $u$ to $u_i$ in $G_2$, then from $u_i$ to $u_j$ in $G_1$
  and then finally from $u_j$ to $w$ in $G_2$ whose length is at most
  \[
     4r_i + \|u_iu_j\|_G + 4r_j
      \le (1+1/k^{1/(d-1)})(\|uw\| + 16r_i) \enspace .
  \]
  Furthermore, $\|uw\|\ge (c-1)r_i$ since $w$ is outside of $D_i$.
  Therefore, for each Type~2 pair, the term that appears in \eqref{asf}
  is of the form
  \[
    \frac{\|uw\|_G}{\|uw\|}\le (1+1/k^{1/(d-1)})(1+O(1/c)) 
       = 1+O(1/k^{1/(d-1)}+1/c) \enspace .
  \]

  \paragraph{Type~3 Pairs.}
  The number of Type~3 pairs is no more than 
  \[  
     k\cdot\sum_{i=1}^{n'}|V\cap D_i\setminus E_i| \enspace .
  \]
  We will prove, by contradiction, that this quantity is $o(n^2)$.
  Suppose, for the sake of contradiction, that this is not the case
  and that
  \begin{equation}
    \sum_{i=1}^{n'} |V\cap D_i\setminus E_i| \ge \frac{\delta n^2}{k}
         \enspace , \eqlabel{kicker}
  \end{equation}
  where $\delta>0$ will be determined later.
  Each term on the left hand side of \eqref{kicker} is at most $n$
  and there are $n'\le \alpha n/k$ terms, for some constant $\alpha
  >0$.  We say that a term on the left hand side of \eqref{kicker} is
  \emph{small} if it is less than $\delta n/2\alpha$ and \emph{large}
  otherwise.  The sum of the small terms is at most $\delta n^2/2k$
  and therefore the sum of the large terms is at least $\delta n^2/2k$.
  Let $J$ be the index set of these large terms.  Then
  \[
    \sum_{i\in J} |V\cap D_i\setminus E_i| \ge \frac{\delta n^2}{2k} \enspace .
  \]
  By the pigeonhole principle, there must exist some point $w^*\in V$
  such that there are at least $\delta n/2k$ indices $i\in J$ such that
  $w^*\in V\cap D_i\setminus E_i$.  To summarize the discussion so far:
  There exists a point $w^*\in V$ and index set, $I\subseteq J$,
  such that:
  \begin{enumerate}
     \item[A1.] $w^*\in V\cap D_{i}\setminus E_{i}$, for all
        $i\in I$
     \item[A2.] $|V\cap D_{i}\setminus E_{i}|\in \Omega(\delta n)$,
       for all $i\in I$; and
     \item[A3.] $|I|\in\Omega(\delta n/k)$.
  \end{enumerate}

  Suppose, without loss of generality, that the smallest ball,
  $\Delta_i$, with $i\in I$ has unit radius.  Partition $I$ into groups
  $G_0,G_1,\ldots$ such that $G_t$ contains all indices $i\in I$ such
  that $\Delta_{i}$ has radius in the interval $[2^t,2^{t+1})$.  We claim
  that each such group, $G_t$, has size $O(c^d)$.  To see why this is
  so, observe that, for each group $G_t$, there exists a point---namely
  $w^*$---that is contained in $|G_t|$ balls $D_{i}$ where $i\in G_t$.
  $D_{i}$ has radius at most $c2^{t+1}$.  This means that the set of balls
  \[
     \{ \Delta_i : i\in G_t\}
  \]
  is contained in a ball, centered at $w^*$, of radius at most
  $(c+2)2^{t+1}$.  Since each ball in this set has radius in
  $[2^t,2^{t+1})$, a standard packing argument implies that some point
  $p\in\R^d$ is contained in $\Omega(|G_t|/c^d)$ of these balls. But then
  Property~4 of $k$-partitions implies that the size of $|G_t|/c^d\in
  O(1)$, so $|G_t|\in O(c^d)$.

  Thus far, we have shown that each group, $G_t$, has size $O(c^d)$
  and the total size of all groups is $|I|\in\Omega(\delta n/k)$.
  Therefore, there must be at least $\Omega(\delta n/(kc^d))$ groups.
  In particular, we can find $h\in\Omega(\delta n/(kc^d\log c))$ groups,
  $G_{t_1},\ldots,G_{t_h}$, such that $t_{i+1} \ge t_{i}+2\log c+2$ for
  each $i\in\{1,\ldots,h-1\}$.  By selecting a representative element from
  each of these groups, we obtain a sequence of indices $i_1,\ldots,i_h$
  such that the radius of $\Delta_{i_{j+1}}$ is at least $4c^d$ times
  the radius of $\Delta_{i_j}$ for each $j\in\{1,\ldots,h-1\}$.

  By choice, $D_{i_1}$ contains at least $ a \delta n$ elements of
  $V$, for the constant $a=1/2\alpha$.  Also by choice, $D_{i_2}\setminus
  E_{i_2}$ contains at least $ a \delta n$ elements of $V$.
  Both $D_{i_1}$ and $D_{i_2}$ contain the point $w^*$.  We claim
  that $E_{i_2}$ contains at least $ a \delta n$ elements of $V$
  as well since there exists a ball, $E_{i_2}'$, centered at $w^*$,
  of radius $r_{i_2}/c > 4cr_{i_1}$, that contains $D_{i_1}$ and
  therefore contains all the (at least $ a\delta n$) points in $D_{i_1}$
  (see \figref{containment}).  The ball $E_{i_2}$ was chosen to contain
  as many elements of $V$ as possible, so it contains at least as many
  elements as $E_{i_2}'$.  Therefore, $D_{i_2}\cup E_{i_2}$ contains at
  least $2 a  \delta n$ points of $V$.

  \begin{figure}
     \begin{center}
       \includegraphics{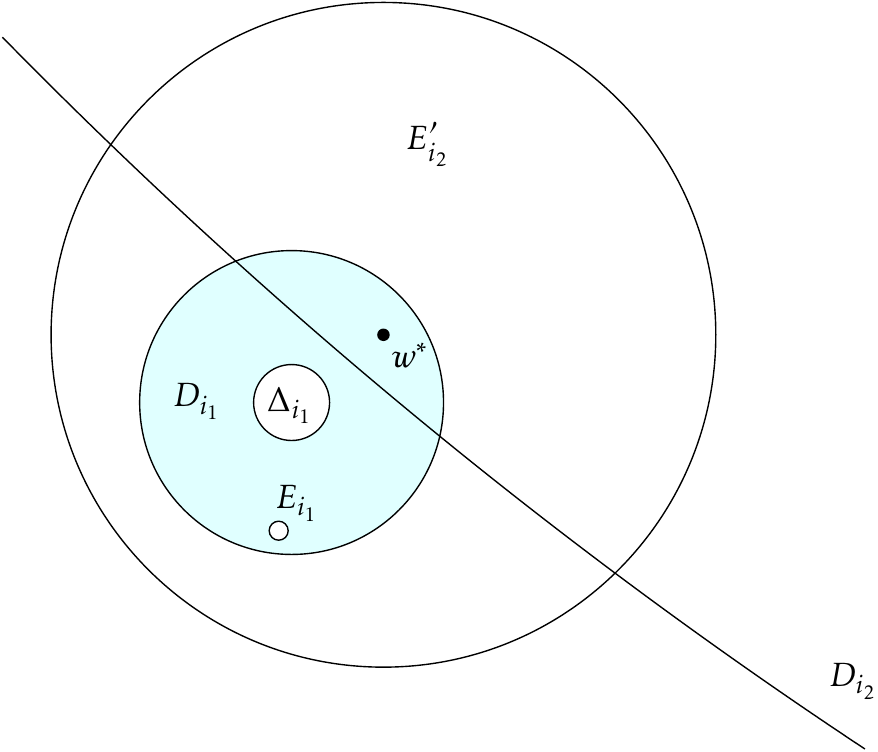}
     \end{center}
     \caption{$D_{i_2}\cup E_{i_2}$ contains at least $2 a  n$ 
              points of $V$.}
     \figlabel{containment}
   \end{figure}

  We can now argue similarly to show that $E_{i_3}$ contains at
  least $2 a \delta n$ points of $V$ so $D_{i_3}\cup E_{i_3}$
  contains at least $3 a \delta n$ points of $V$.  In general,
  this argument shows that $D_{i_h}\cup E_{i_h}$ contains at least
  $h a \delta n$ points of $V$.  But this yields a contradiction for
  $h> 1/ a \delta$, since $V$ contains only $n$ points.  To obtain
  this contradiction, our choice of $\delta$, $c$, and $k$ must satisfy
  \[
       h\in\Omega\left(\frac{\delta n}{kc^d\log c}\right) \ge
          \frac{1}{ a  \delta}
  \]
  which is satisfied by any choice of $\delta$, $c$, and $k$ such that
  \[
       \frac{\delta^2 n}{kc^d\log c} \ge \frac{C}{a}
  \]
  for some sufficiently large constant $C$.  In particular, the value
  \[
       \delta = \sqrt{\frac{Ckc^d\log c}{an}}
  \]
  works.  So the total number of terms of the sum in \eqref{asf}
  contributed by Type~3 pairs is at most
  \[
    \delta n^2 \in O\left(\sqrt{kn^3c^d\log c}\right) \enspace .
  \]

  \paragraph{Type~4 Pairs.}  
  Let $\beta > 0$ be a constant whose value will be discussed later.
  We say that a Type~4 pair of points $u\in V_i$, $w\in E_i\cap D_i$ is
  a \emph{bad pair} if
  \[
      \|uw_i\|+2\|w_iw\| \ge (1+\beta/c)\|uw\| \enspace ,
  \]
  and otherwise it is a \emph{good pair}.  For any good pair $(u,w)$,
  \[
    \frac{\|uw\|_G}{\|uw\|} = 1+O(1/c) \enspace ,
  \]
  so we can focus our study on bad pairs.  To upper-bound the number
  of bad pairs we will assume, for the sake of contradiction, that the
  number of bad pairs is at least $\epsilon n^2$.

  Let $b_i$ denote the number of bad pairs $(u,w)$ with $u\in V_i$
  and $w\in E_i\cap D_i$.  Then, by assumption,
  \[
    \sum_{i=1}^{n'} b_i \ge \epsilon n^2 \enspace .
  \]
  Applying the same reasoning used to study Type~3 pairs, we can find a
  point $w^*\in V$ and a set of indices $i_1,\ldots,i_{\ell}$ such that
  \begin{enumerate}\label{w-star}
    \item[B1.] $w^*\in E_{i_j}$, for all $j\in\{1,\ldots,\ell\}$;
    \item[B2.] $b_{i_j} \in \Omega(\epsilon kn)$, for all 
       $j\in\{1,\ldots,\ell\}$; and
    \item[B3.] $\ell\in \Omega(\epsilon n/k)$.
  \end{enumerate}

  Assume that the indices $i_1,\ldots,i_\ell$ are ordered so that
  $r_{i_j}\le r_{i_{j+1}}$ for all $j\in\{1,\ldots,\ell-1\}$.  Consider
  the sequences $E'_{i_1},\ldots,E'_{i_\ell}$, where each $E'_{i_j}$
  is the ball of radius $2r_{i_j}/c$ centered at $w^*$. Recall that
  the radius of $E_{i_j}$ is $r_{i_j}/c$ and $w^*\in E_{i_j}$, so that
  $E'_{i_j}\supset E_{i_j}$, for each $j\in\{1,\ldots,\ell\}$.

  The plan for the rest of the proof is as follows:  We will find an
  annulus $A=E'_{i_{j^*+t+C}}\setminus E'_{i_{j^*}}$ that does not contain
  very many points of $V$.  We will then use the fact that $A$ does not
  contain many of points of $V$ and Property~5 of $k$-partitions to prove
  that, for some index $j\in\{j^*,\ldots,j^*+t\}$, $b_{i_j}< D\epsilon kn$
  for any constant $D>0$.  This yields the desired contradiction, since
  $i_1,\ldots,i_\ell$ were chosen so that $b_{i_j}\in\Omega(\epsilon k n)$
  for every $j\in\{1,\ldots,\ell\}$.

  To begin, we fix some positive integers $C\in O(1)$ and $t< \ell
  - C$ to be specified later.  For each $j\in\{2,\ldots,\ell\}$,
  let $n_{i_j}=|E'_{i_j}\cap V\setminus E'_{i_{j-1}}|$. We have that
  $\sum_{i=2}^{\ell} n_{i_j} \le n$ and $\ell \in\Omega(\epsilon n/k)$.
  Using these two bounds, a simple averaging argument is sufficient
  to show that there must exist an index $j^*\in\{1,\ldots,\ell-t-C\}$
  such that
  \begin{equation}
     \sum_{j=j^*+1}^{j^*+t+C} n_{i_j}
        = |V\cap E'_{i_{j^*+t+C}}\setminus \E'_{i_{j^*}}| 
          \in O((t+C)k/\epsilon)
          = O(tk/\epsilon) \enspace . \eqlabel{sparse}
  \end{equation}

  The careful choice of each $w_i\in E_i$ implies the following claim,
  whose proof is deferred until later.
  \begin{clm}\clmlabel{zuper}
    For every $j\in\{j^*+1,\ldots,j^*+t\}$, every $u\in V_{i_j}$, and
    every $w\in E_{i_j}\cap E'_{i_{j^*}}\cap V$, $G$ contains a path of
    length at most $\|uw\|+O(r_{i_{j^*}}/c)$.
  \end{clm}

  \begin{figure}
    \begin{center}
      \includegraphics{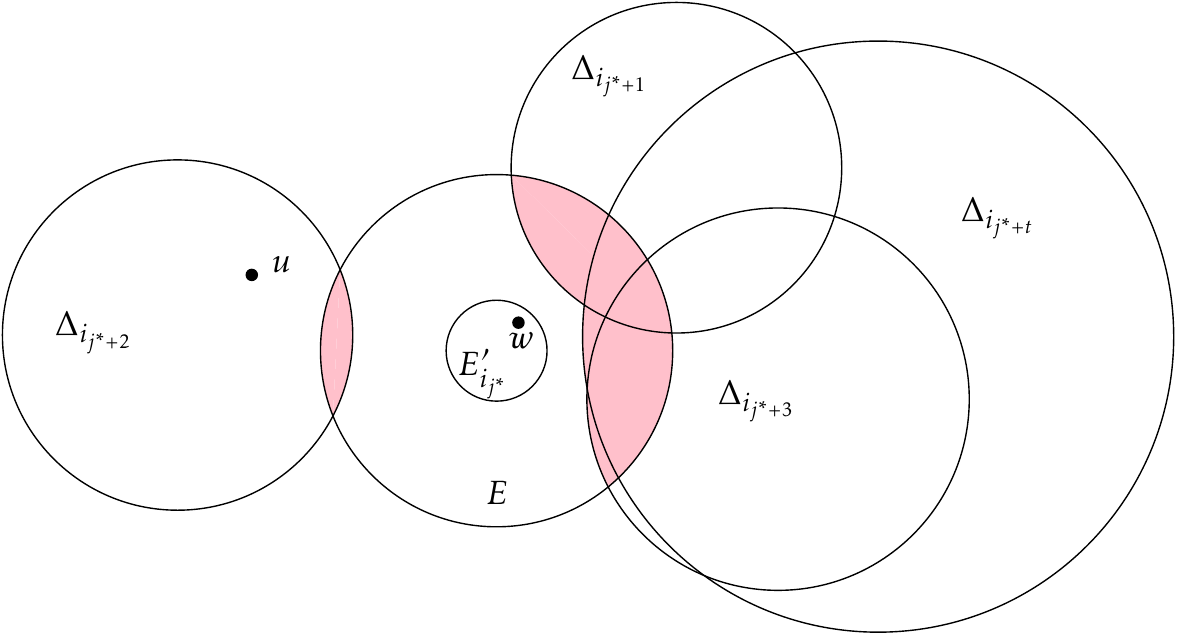}
    \end{center}
    \caption{The number of points in $E$ that are assigned to
      $\Delta_{i_{j^*}},\ldots,\Delta_{i_{j^*+t}}$
      is only $O(k)$.}
    \figlabel{type4}
  \end{figure}

  Refer to \figref{type4}.  Let $E$ denote the ball centered at $w^*$
  and having radius $r_{i_{j^*}}=c\cdot\radius(E_{i_{j^*}})$.  Note that
  any point $u\not\in E$ is at distance at least $(1-2/c)r_{i_{j^*}}$ from
  any point $w\in E'_{i_{j^*}}$.  Therefore, by \clmref{zuper}, for any
  $w\in E_{i_j}\cap E'_{i_{j^*}}\cap V$, any $j\in\{j^*+1,\ldots,j^*+t\}$
  and any $u\in V_{i_{j}}\setminus E$,
  \[  
     \frac{\|uw\|_G}{\|uw\|} = 1+O(1/c) \enspace . 
  \]
  In other words, by choosing an appropriate constant $\beta$ in the
  definition of bad pairs, $u$ can not form a bad pair with a point
  $w\in E'_{i_{j^*}}$ unless $u$ is contained in $E$.


  Next we will upper-bound $\sum_{j=j^*+1}^{j^*+t}b_{i_j}$, the number
  of bad pairs, $(u,w)$, with $u\in V_{i_j}$, $w\in E_{i_j}$, and
  $j\in\{j^*+1,\ldots,j^*+t\}$.  From the preceding discussion, each such
  pair falls into at least one of the two following categories:
  \begin{enumerate}
    \item Category A: $u\in E$. 
      To bound the number of these pairs, consider the balls
      $\Delta_{i_{j^*+1}},\ldots,\Delta_{i_{j^*+t}}$. Each of these
      balls has radius at least $r_{i_{j^*}}=\radius(E)$.  Therefore,
      Property~5 of $k$-partitions implies that
      \[
        \left|\bigcup_{j=j^*+1}^{j^*+t} V_{i_{j}}\cap E\right|
          \in O(k)
      \]
      Each $u$ in $V_{i_j}\cap E$ takes part in at most $O(n)$ bad pairs,
      so the number of bad pairs of this form is $O(kn)$.

    \item Category B: $w\not\in E'_{i_{j^*}}$.
      By \eqref{sparse}, the number of points that are not in
      $E'_{i_{j^*}}$ but still in some $E_{i_j}\subset E'_{i_j}$
      for $j\in\{j^*+1,\ldots,j^*+t\}$ is $O(tk/\epsilon)$.
      Each such point, $w$, forms a bad pair with at most
      $tk$ different values of $u$---namely the elements of
      $V_{i_{j^*}+1},\ldots,V_{i_{j^*+t}}$. Therefore, the number of
      bad pairs in this category is at most $O((tk)^2/\epsilon)$.
  \end{enumerate}
  Thus, we have 
  \begin{equation}
    \sum_{j=j^*+1}^{j^*+t} b_{i_{j}}  \in O(kn + (tk)^2/\epsilon) \enspace .
    \eqlabel{ub-4}
  \end{equation}
  On the other hand, the indices $i_1,\ldots,i_\ell$ are chosen
  so that $b_{i_j}\in\Omega(\epsilon kn)$, for \emph{every}
  $j\in\{1,\ldots,\ell\}$.  Therefore
  \begin{equation}
    \sum_{j=j^*+1}^{j^*+t} b_{i_{j}} 
        \in \Omega(t\epsilon kn) \enspace .
        \eqlabel{lb-4}
  \end{equation}
 
  Equating the right hand sides of \eqref{ub-4} and \eqref{lb-4}, we
  obtain 
  \[
     C_1t\epsilon kn \le C_2(kn + (tk)^2/\epsilon) \enspace , 
  \]
  for some constants $C_1$ and $C_2$.  This yields a contradiction when 
  \[
      t > D/\epsilon
  \]
  and
  \[
          \epsilon\ge D\sqrt{\frac{tk}{n}} \enspace ,
  \]
  for some sufficiently large constant $D$.  Setting $t=D/\epsilon$
  leaves only the condition
  \[
          \epsilon \ge D^{2/3}\left(\frac{k}{n}\right)^{1/3} \enspace .
  \]
  Thus, there exists some constant $D_0$ such that the total number of
  bad Type~4 pairs is at most
  \[
    \epsilon n^2 \enspace ,
  \]
  provided that $\epsilon \ge D_0(k/n)^{1/3}$. 

  All that remains in handling Type~4 pairs is to prove \clmref{zuper}.

  \begin{proof}[Proof of \clmref{zuper}]
  Let $u$ by any point in $V_{i_{j}}$, for any
  $j\in\{j^*+1,\ldots,j^*+t\}$.  Since there is an edge joining $u$ and
  $w_{i_{j}}$ and $\|w^*w\|\le 2r_{i_{j^*}}/c$, there is a path from $u$
  to $w$ of length at most
  \begin{align*}
      \|uw\|_G & \le \|uw_{i_j}\| + 2\|w_{i_j}w^*\| + 2\|w^*w\|  \\
         & \le \|uw\| + \|ww_{i_j}\| + 2\|w_{i_j}w^*\| + 4r_{i_j^*}/c  \\
         & \le \|uw\| + \|ww^*\| + 3\|w_{i_j}w^*\| + 4r_{i_j^*}/c \\
         & \le \|uw\| + 3\|w_{i_j}w^*\| + 6r_{i_j^*}/c 
     \enspace ,
  \end{align*}
  so it is sufficient to prove that $\|w^*w_{i_{j}}\|\in
  O(r_{i_{j^*}}/c)$.  If $w_{i_{j}}=w^*$, then we are done, so assume
  $w_{i_{j}}\neq w^*$.  Since $w^*\in E_{i_{j}}$ but was not chosen
  to act as $w_{i_{j}}$, there must exist some index $i'$ such that
  $E_{i'}\cap E_{i_{j}}\neq\emptyset$, $|E_{i'}\cap V|\ge\epsilon n$,
  $r_{i'}\le r_{i_{j^*}}$, and $w_{i_{j}}\in E_{i'}$. (Note for later:
  This is the only place in the entire proof of \thmref{upper-bound}
  where the choice of $w_{i_j}$ matters.)

  \noindent There are two cases to consider:
  \begin{enumerate}
    \item 
    If $E'_{i_{j^*}}$ intersects $E_{i'}$, then (see \figref{gummo-a})
    \[
       \|w^*w_{i_j}\|\le 2r_{i_{j^*}}/c + 2r_{i'}/c \le 4r_{i_{j^*}}/c \in O(r_{i_{j^*}}/c) \enspace ,
    \]
    and we are done. 
    \begin{figure}
      \begin{center}
        \includegraphics{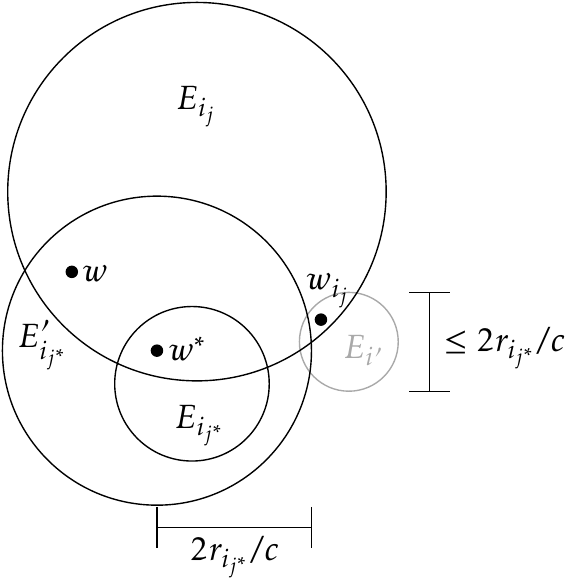}
        \caption{If $E_{i'}$ intersects $E'_{i_j^*}$ then $\|w^*w_{i_j}\|\le 4r_{i_{j^*}}/c$.}
        \figlabel{gummo-a}
      \end{center}
    \end{figure}
  
    \item If $E_{i'}$ and $E'_{i_{j^*}}$ are disjoint, then we claim that
    $E_{i'}\subset E'_{i_{j^*+t+C}}$ (see \figref{gummo-b}).  To see why
    this is so, we argue that, when $C$ is a sufficiently large constant,
    \begin{equation}
      r_{i_{j^*+t+C}} > 2r_{i_{j^*+t}} 
          \ge 2r_{i_j} \enspace . \eqlabel{xxxx}
    \end{equation}
    This implies that $E_{i'}\subset E'_{i_{j^*+t+C}}$ since every
    point in $E_{i'}$ is at distance at most $4r_{i_j}$ from $w^*$
    and $E'_{i_{j^*+t+C}}$ is centered at $w^*$ and has radius
    $2r_{i_{j^*+t+C}} > 4r_{i_j}$.

    To see why the first inequality in \eqref{xxxx} holds, we first
    observe that, for any $i\in\{1,\ldots,n'\}$, if any $u\in V_i$
    and $w\in E_i$ form a bad pair, then the distance from $\Delta_i$
    to $E_i$ is less than $r_i$.  Since each $V_{i_j}$ and $E_{i_j}$
    define at least one pair, this implies that each $\Delta_{i_j}$ is
    contained in a ball of radius $(3+2/c)r_{i_j}$ centered at $w^*$.

    Now, if $r_{i_{j^*+t+C}} < 2r_{i_{j^*+t}}$, then
    $\Delta_{i_{j^*+t}},\ldots,\Delta_{i_{i_{j^*+t+C}}}$ is a set of
    $C+1$ balls all having radii in $[r_{i_{j^*+t}},2r_{i_{j^*+t}})$ and
    that are all contained in a ball of radius $(6+4/c)r_{i_{j^*+t}}$
    centered at $w^*$.  Therefore, some point, $p$, in this ball is
    contained in $\Omega(C)$ of these balls.  But then Property~4 of
    $k$-partitions implies that $C\in O(1)$.  Thus, for a sufficiently
    large constant, $C$, $r_{i_{j^*+t+C}} > 2r_{i_{j^*+t}}$ and
    $E_{i'}\subset E_{i_{j^*+t+C}}$, as promised.

    Since $E_{i'}$ and $E_{i_{j^*}}$ are disjoint and $E_{i'}\subset
    E_{i_{j^*+t+C}}$,
    \[
      |V\cap E_{i_{j^*+t+C}}\setminus E_{i_{j^*}}| 
         \ge |V\cap\E_{i'}| \ge \epsilon n \enspace .
    \]
    \begin{figure}
      \begin{center}
        \includegraphics{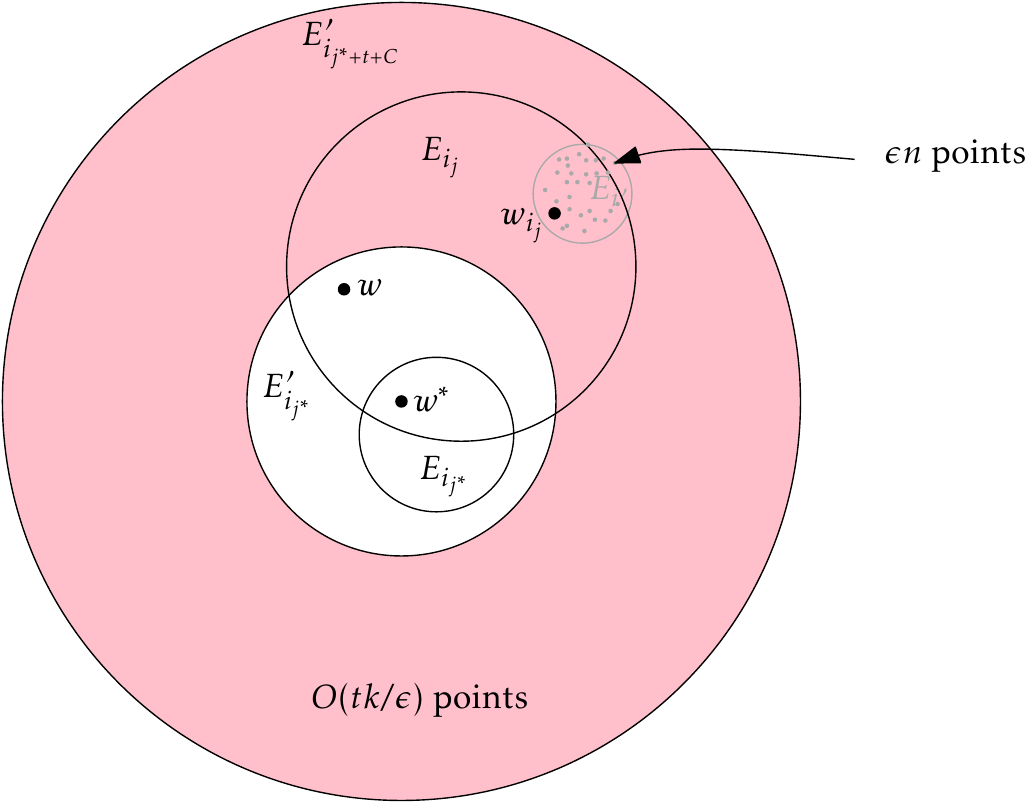}
        \caption{If $E_{i'}$ does not intersect $E'_{i_j^*}$ then
          $E_{i_{j^*+t+C}}\setminus E_{i_{j^*}}$ contains at least
          $\epsilon n$ points.}
        \figlabel{gummo-b}
      \end{center}
    \end{figure}
    But, by definition, 
    \[
      |V\cap E_{i_{j^*+t+2\mu}}\setminus E_i| 
         \in O(tk/\epsilon) \enspace .
    \]
    This yields a contradiction when $t<D' \epsilon^2 n/k$, for a
    sufficiently small constant $D'>0$.  Note that we have already set
    $t=D/\epsilon$, so this requires that
    \[
       D/\epsilon < D'\epsilon ^2 n/k \Leftrightarrow \epsilon 
           > \left(\frac{Dk}{D'n}\right)^{1/3} \enspace .
    \]
    Therefore, the condition on $t$ is already contained in the
    requirement that $\epsilon \ge D_0(k/n)^{1/3}$.
  \end{enumerate}
  This completes the proof of \clmref{zuper}.
  \end{proof}

  \paragraph{Finishing Up.}
  We can now pull everything together to summarize and complete the proof
  of \thmref{upper-bound}.
  \begin{enumerate}
    \item The number of Type~1 pairs is at most $O(kn)$.
    \item For each Type~2 pair, $(u,w)$, 
      $\|uw\|_G/\|uw\|\le 1+ O(1/k^{1/(d-1)}+1/c)$.
    \item The number of Type~3 pairs is at most
      $O(\sqrt{kn^{3}c^{d}\log c})$.
    \item For each good Type~4 pair, $(u,w)$, 
      $\|uw\|_G/\|uw\|\le 1+ O(1/c)$.
    \item The number of bad Type~4 pairs is at most 
       $O(\epsilon n^2)$ provided that $\epsilon \in \Omega((k/n)^{1/3})$.
  \end{enumerate}
  Putting all these together, we obtain
  \begin{equation}\eqlabel{master-asf}
     \asf(G) = 1 + O\left(1/k^{1/(d-1)} + 1/c 
       + \binom{n}{2}^{-1}\left(kn+\sqrt{kn^{3}c^{d}\log c}
             +\epsilon n^2\right)\right) \enspace .
  \end{equation}
  Taking 
  \[ 
       \epsilon = (k/n)^{1/3} \enspace ,
  \]
  \[ 
       k = \left(\frac{n}{\log n}\right)^{(d-1)/(2d+1)}
  \]
  and
  \[
       c = \left(\frac{n}{\log n}\right)^{1/(2d+1)}
  \]
  yields
  \[
     \asf(G) = 1+ O\left(\left(\frac{\log n}{n}\right)^{1/(2d+1)}\right) \enspace .
  \]
  (The $O(kn/\binom{n}{2})$ term and the $\epsilon n^2/\binom{n}{2}$
  term are dominated by the three other terms, which are all $O((\log
  n/n)^{1/(2d+1)})$.)
\end{proof}

\begin{rem}\remlabel{epsilon-choice}
   In the next section, efficient algorithms will require that
   $\epsilon = (\log^\kappa n)/k$ for some constant $\kappa \ge 0$.
   Since the bound on the average stretch factor requires that
   $\epsilon\in\Omega((k/n)^{1/3})$, this implies that the value of $k$
   must be at most $n^{1/4}$.  The following table shows the optimal
   choices of $k$ and $c$ for different dimensions, $d$:

   \begin{center}
   \renewcommand\arraystretch{1.3} 
     \begin{tabular}{|r|l|l|l|}
        \hline
        $d$ & $k$ & $c$ & $\asf(G)$ \\\hline\hline
        $2$ & $(n/\log n)^{1/5}$ & $(n/\log n)^{1/5}$ 
              & $1+ O((\log^{5\kappa+1} n/n)^{1/5})$ \\
        $\ge 3$ & $n^{1/4}$ & $(\log n/n^{3/4})^{1/(d+2)}$ 
              & $1+ O(1/n^{1/(4d-4)})$ \\
     \hline
     \end{tabular}
   \end{center}
\end{rem}

\section{Algorithms}
\seclabel{algorithms}

In this section, we discuss efficient algorithms for computing a graph
$G=(V,E)$ that has low average stretch factor, given the point set
$V\subset\R^d$.  We first present a fairly straightforward adaptation of
the construction given in the preceding section that yields an $O(n\log^d
n)$ time algorithm.  We then show that some refinements of this algorithm
lead to an $O(n\log n)$ time algorithm.

Our algorithms are randomized.  Throughout this section, we say that an
event within an algorithm happens \emph{with high probability} if the
event occurs with probability at least $1-O(n^{\alpha})$ and $\alpha$
can be made into an arbitrarily large constant by increasing the running
time of the algorithm by a constant factor.

\subsection{A Simple Algorithm}

For ease of exposition, we start with a simple algorithm that is
relatively faithful to the proof of \thmref{upper-bound}.  
\begin{thm}\thmlabel{algorithm}
  For every constant dimension, $d$, and every set, $V$, of $n<\infty$
  points in $\R^d$, there exists a randomized $O(n\log^{d} n )$ time
  algorithm that constructs a graph $G'=(V,E)$ that has $O(n)$ edges and,
  with high probability, 
  \[
      \asf(G')=\begin{cases}
         1+ O((\log n/n)^{1/5}) & \text{for $d=2$} \\
         1+ O(1/n^{1/(4d-4)}) & \text{for $d\ge 3$} \\
      \end{cases}
  \]
\end{thm}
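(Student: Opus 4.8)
The plan is to turn the existential construction of \thmref{upper-bound} into an algorithm by (i)~replacing each "find the best ball" and "choose $w_i$ wisely" step by an approximate, randomized version, and (ii)~checking that the analysis of \thmref{upper-bound} is robust to these approximations. The backbone is already efficient: by \lemref{k-partition} the $k$-partition $(\{\Delta_1,\ldots,\Delta_{n'}\},f)$ can be computed in $O(n\log n)$ time; the $2$-spanner $G_2$ of $V$ and the $(1+1/k^{1/(d-1)})$-spanner $G_1$ of the hub set $H$ both have well-known near-linear-time constructions (cited already as \cite{callahan.kosaraju:faster,salowe:constructing,vaidya:sparse} and \cite{carmi.smid:optimal,ruppert.seidel:approximating}). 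So the only parts that are not obviously implementable in near-linear time are the choice of the balls $E_i$ and the representatives $w_i$, since each is defined by a maximization over an infinite family of candidate balls.

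First I would reduce the search for $E_i$ to a finite, local search: the optimal ball of radius $r_i/c$ intersecting $D_i$ can, up to a loss that only changes constants in the count $|E_i\cap V|$, be taken to be centered at one of the input points, or we can afford to overcount the points captured by an $E_i$ of radius, say, $2r_i/c$ — and crucially, re-reading the proof of \thmref{upper-bound}, nothing there uses that $E_i$ is \emph{optimal}: Type~3 only needs that $E_{i_2}$ captures \emph{at least} as many points as the concentric ball $E'_{i_2}$ (so an approximate maximizer that loses a constant factor is fine, after rescaling $\delta$), and \clmref{zuper} only needs that $w^*\notin E_i$ forces the existence of an $E_{i'}$ with $|E_{i'}\cap V|\ge\eps n$ and $r_{i'}\le r_{i_{j^*}}$ containing $w_{i_j}$. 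So I would instead maintain, for each ``scale'' $2^t$, the heavy clusters — the $O(1/\eps)$ balls of each radius that contain $\ge\eps n$ points — found by a sampling/approximate-range-counting procedure, and for each $i$ let $E_i$ be (an approximation to) the densest radius-$r_i/c$ ball meeting $D_i$, with $w_i$ chosen to lie in the heavy cluster of smallest index that meets $E_i$, or arbitrarily in $E_i$ if none does. Approximate range counting data structures for balls in $\R^d$ answer such queries in $O(\log n)$ time after $O(n\log n)$ preprocessing, and there are $n'=O(n/k)=O(n)$ queries, giving $O(n\log n)$ for this part; the ``$\log^d$'' in the simple algorithm comes from using a plain $d$-dimensional range tree / $k$-d tree instead, which is simpler to state, hence the bound $O(n\log^d n)$ in \thmref{algorithm}.

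The randomization enters through sampling: to identify which balls are heavy and to estimate $|E_i\cap V|$, I would take a random sample $S\subseteq V$ of size $\Theta(\eps^{-2}\log n)$ and use it as an $\eps$-approximation for the (bounded VC-dimension) range space of balls in $\R^d$; standard $\eps$-net/$\eps$-approximation bounds say that with high probability every ball's empirical density is within $\pm\eps/2$ of its true density, simultaneously for all balls. This is exactly where ``with high probability'' comes from, and it is why $\eps$ must not be too small relative to $n$ — which dovetails with \remref{epsilon-choice}, forcing $\eps=(\log^\kappa n)/k$ and hence $k\le n^{1/4}$, and giving the stated two-case bound on $\asf(G')$ (the $d=2$ optimum $k=c=(n/\log n)^{1/5}$ is now blocked, so one takes the best $k$ subject to $k\le n^{1/4}$, yielding $1+O((\log n/n)^{1/5})$ for $d=2$ and $1+O(n^{-1/(4d-4)})$ for $d\ge3$, matching the table). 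I then re-run the Type~1--4 accounting from the proof of \thmref{upper-bound} verbatim, observing at each use of ``$E_i$ is maximal'' or ``$|E_i\cap V|\ge\eps n$'' that the sampled/approximate versions differ only by constant factors and by replacing $\eps$ with $\eps/2$, which is absorbed into the $O(\cdot)$'s.

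The main obstacle I expect is not any single step but making the sampling uniform across the scales: the balls $E_i$ range over all radii $r_i/c$ from tiny to $\mathrm{diam}(V)$, and a single sample $S$ of size $\Theta(\eps^{-2}\log n)$ must simultaneously certify heaviness for all of them. This is handled by the fact that the relevant range space (Euclidean balls in $\R^d$) has constant VC dimension, so one $\eps$-approximation works for \emph{all} balls at once with high probability; the only care needed is that we never rely on counts below the $\eps n$ threshold, and indeed the proof of \thmref{upper-bound} only ever uses the lower bound $|E_i\cap V|\ge\eps n$ or comparisons between two such counts, never a precise small count. A secondary, purely bookkeeping nuisance is threading the $O(1)$-factor losses through the constants $\alpha,\delta,\beta,C,D,D'$ of the original proof; since all of those were already unspecified constants chosen at the end, this costs nothing.
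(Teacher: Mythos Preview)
Your high-level strategy is right and matches the paper's: the backbone ($k$-partition, $G_1$, $G_2$) is already near-linear, the only nontrivial steps are choosing $E_i$ and $w_i$, and the analysis of \thmref{upper-bound} is indeed robust to constant-factor approximations in these choices. But you have glossed over exactly the places where the paper does real algorithmic work, and your running-time accounting does not add up. You write ``there are $n'=O(n/k)$ queries'' for the $E_i$ step, but a single counting query per $i$ cannot locate the densest ball of radius $r_i/c$ meeting $D_i$; you must \emph{search} over candidate centers. The paper handles this by (a) replacing balls by axis-aligned square boxes throughout (so ordinary range trees apply), and (b) for each $i$, drawing $\Theta((\log n)/\epsilon)$ random points of $V$ as candidate centers and counting the points in the box around each; with $\epsilon=1/k$ this gives $O(n'\cdot k\log n\cdot\log^{d-1} n)=O(n\log^d n)$. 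Your single global VC $\epsilon$-approximation certifies \emph{counts} uniformly, but does not by itself tell you \emph{where} the dense region sits; you never say how the search is organized or why it costs what you claim.

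The $w_i$ step has the same problem. The paper needs, for each $w\in V$, the index $i(w)$ of the smallest $E_j$ with $|E_j\cap V|\ge\epsilon n$ containing $w$; it computes this via a point--box duality that maps square boxes in $\R^d$ to points in $\R^{d+1}$ and uses a $(d{+}1)$-dimensional range tree, at cost $O(n\log^d n)$. This, together with the $E_i$ search, is the actual source of the $\log^d$, not an incidental data-structure choice that a sharper structure for balls would remove; the paper's separate $O(n\log n)$ theorem requires further sampling tricks beyond swapping data structures. Your ``$w_i$ lies in the heavy cluster of smallest index that meets $E_i$'' is neither obviously the same object nor accompanied by an algorithm. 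One minor correction: for $d=2$ the optimum $k=(n/\log n)^{1/5}$ is well below $n^{1/4}$, so it is not ``blocked'' by the constraint on $\epsilon$; \remref{epsilon-choice} keeps exactly that value of $k$ for $d=2$.
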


\begin{proof}
  Throughout this proof, a \emph{box} is an axis-aligned box of the form
  $[a_1,b_1]\times\cdots\times[a_d,b_d]$ and we call this a \emph{square
  box} if $b_i-a_i=b_j-a_j$ for all $i,j\in\{1,\ldots,d\}$. As before,
  $L(B)$ denotes the length of the longest side of the box, $B$.  The
  graph $G'$ constructed by our algorithm is similar to the graph, $G$,
  described in the previous section, except for three major differences:
  \begin{enumerate}
     \item Every element in $G$ that is defined in terms of a ball is
     now defined in terms of a square box.  In particular,
     \begin{enumerate}
       \item $\Delta_1,\ldots,\Delta_{n'}$ are minimal square boxes that contain
         the corresponding boxes defined by nodes of the fair-split tree;
       \item $D_1,\ldots,D_{n'}$ are square boxes where each $D_i$
         is a square box of side-length $cL(\Delta_i)$ centered at the
         center of $\Delta_i$; and
       \item $E_1,\ldots,E_{n'}$ are square boxes where each $E_i$
         is a square box of side-length $L(\Delta_i)/c$ that intersects $D_i$.
     \end{enumerate}
     \item The box $E_i$ does not perfectly maximize the number of points
       it contains. Instead, we guarantee that, if there exists any
       box of side-length $L(\Delta_i)/2c$ that intersects $D_i$ and
       contains $m\ge \epsilon n$ points of $V$, then $E_i$ is chosen so that
       it contains at least $m$ points of $V$.
     \item We change the value of $\epsilon$ to $\epsilon = 1/k$
      and use the values of $c$ and $k$ given in \remref{epsilon-choice}.
  \end{enumerate} 
  It is straightforward, but tedious, to check that the proof of
  \thmref{upper-bound} using these new definitions of $\Delta_i$,
  $D_i$, and $E_i$ shows that the graph $G'$ satisfies the
  requirements of the theorem.  What remains is to show how the graph $G'$
  can be constructed in $O(n\log^d n)$ time.

  As described in the proof of \lemref{k-partition}, computing the
  fair-split tree, and the resulting $\Delta_1,\ldots,\Delta_{n'}$,
  $V_1,\ldots,V_{n'}$, $u_1,\ldots,u_{n'}$ and $D_1,\ldots,D_{n'}$
  is easily accomplished in $O(n\log n)$ time.  The 2-spanner
  of the points in $V$ can be constructed in $O(n\log
  n)$ time using any of several different possible methods
  \cite{callahan.kosaraju:faster,salowe:constructing,vaidya:sparse}.
  An algorithm of Ruppert and Seidel \cite{ruppert.seidel:approximating}
  can, for any $\gamma >0$, construct a $(1+\gamma)$-spanner of
  $n'$ points in $\R^d$ that has $O((1/\gamma)^{d-1}n')$ edges in
  $O((1/\gamma)^{d-1} n'\log^{d-1} n')$ time.  Using this algorithm with
  $\gamma=1/k^{1/(d-1)}$, the $(1+1/k^{1/(d-1)})$-spanner of the $n'$
  points in $H$ can be constructed in time
  \[
     O(kn'\log^{d-1} n') = O(n\log^{d-1} n) \enspace .
  \]
  All that remains is to show how to compute $E_1,\ldots,E_{n'}$ and
  $w_1,\ldots,w_{n'}$ efficiently.

  The reason for moving from balls to square boxes is that boxes allow for
  the use of range trees \cite{bentley:multidimensional,luecker:data}.
  Range trees allow us to preprocess, using $O(n\log^{d-1} n)$ time and
  space, a set of $n$ points in $\R^d$ so that, for any query box, $B$,
  we can find, in $O(\log^{d-1} n)$ time: (1)~the number of points in $B$,
  or (2)~the point with minimum index in $B$.\footnote{For this third
  type of query, we use a constant-time range-minimum data structure
  \cite{bender.farach-colton:lca} as the 1-dimensional substructure.}
  Furthermore, by using a box-point duality that maps square boxes in
  $\R^d$ onto points in $\R^{d+1}$, a set of square boxes in $\R^d$
  can be preprocessed, using $O(n\log^d n)$ time and space, so that, for
  any query point $p\in\R^d$, we can, in $O(\log^d n)$ time, determine
  the index of the smallest box that contains $p$.

  Our first task is to find the boxes $E_1,\ldots,E_{n'}$ and for
  this we use random sampling.  Let $D'_{i}$ be the square box
  that is centered at the center of $\Delta_i$ and has side-length
  $(c+1/2c)L(\Delta_i)$.  We repeatedly select a random point $u\in V$.
  If $u$ is not in $D'_i$, then we discard it.  Otherwise, we count the
  number of points in the box, $E(u)$, that is centered at $u$ and has
  side length $L(\Delta_i)/c$.  The sample box containing the maximum
  number of points is chosen as $E_i$.  (If all samples were discarded,
  then we take $E_i$ to be any square box of side length $L(\Delta_i)/c$
  that intersects $D_i$.)

  Let $E'_i$, be some square box of side length $L(\Delta_i)/2c$ that
  intersects $D_i$ and that contains the maximum number of points in $V$.
  Suppose, furthermore, that $|E'_i\cap V|\ge \epsilon n$ (since this
  is the only case in which we make any guarantees about $E_i$).  If we
  repeat the above sampling procedure $\alpha\ln n/\epsilon$ times, then
  the probability that none of our samples is a point in $|E'_i\cap V|$
  is at most
  \[
      (1-\epsilon)^{\alpha\ln n/\epsilon} 
         \le 1/e^{\alpha\ln n} = 1/n^{\alpha} \enspace .
  \]
  Therefore, with high probability, at least one of our sample points,
  $u$, is in $E'_i$.  In this case, $E(u)$ contains $E'_i$, so we obtain
  a box, $E_i$ that intersects $D_i$ and contains $E'_i$.

  By storing the points of $V$ in a range tree, we can therefore identify
  the boxes $E_1,\ldots,E_{n'}$ in time
  \[
     O(n'(\log n/\epsilon)\log^{d-1} n) 
        = O\left(\frac{n\log^{d}n}{\epsilon k} \right) = O(n\log^d n) \enspace .
  \]

  Let $E\subseteq\{E_1,\ldots,E_{n'}\}$ be the subset of boxes that
  contain at least $\epsilon n$ points of $V$. By using the point-box
  duality, we can store $E$ in a range tree so that, for each point
  $w\in V$, we can determine the index, $i(w)$, of the smallest box
  that contains at least $\epsilon n$ points of $V$ and contains $w$.
  Building the range tree for (the duals of) $E_1,\ldots,E_{n'}$ takes
  $O(n'\log^d n)\subset O(n\log n)$ time and searching this range tree for
  each point $w\in V$ takes time
  \[
     O(n\log^d n) \enspace .
  \]

  Finally, we can store the point/index pairs $(u,i(u))$, for all
  $u\in V$, in a range tree so that, for each box, $E_i$, we can find
  a point $w_i\in E_i$ that minimizes $i(u)$.  Building this range
  tree takes $O(n\log^{d-1} n)$ time and the queries on this tree take
  $O(n'\log^{d-1} n)\subset O(n\log^{d-1} n)$ time.
\end{proof}

\subsection{A Faster Algorithm}

In the preceding section we gave an $O(n\log^d n)$ time algorithm.
In this section, we show that the running time can be reduced to $O(n\log
n)$ with only a small increase in the average stretch factor.

\begin{thm}\thmlabel{algorithm-faster}
  For every constant dimension, $d$, and every set, $V$, of $n<\infty$
  points in $\R^d$, there exists a randomized $O(n\log n)$ time
  algorithm that constructs a graph $G''=(V,E)$ that has $O(n)$ edges
  and, with high probability, 
  \[
      \asf(G'')=\begin{cases}
         1+ O((\log^{6} n/n)^{1/5}) & \text{for $d=2$} \\
         1+ O(\log^{(d-2)/(d-1)}n/n^{1/(4d-4)}) 
                  & \text{for $d\ge 3$} 
      \end{cases}
  \]
\end{thm}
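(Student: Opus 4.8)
The plan is to build $G''$ as a variant of the graph $G'$ from \thmref{algorithm}, changing only the data structures used in its construction and one spanner parameter, so that the total running time drops to $O(n\log n)$ while the proof of \thmref{upper-bound} (as adapted in the proof of \thmref{algorithm}) still certifies the claimed average stretch factor. I would set $\epsilon=(\log n)/k$, i.e.\ $\kappa=1$ in \remref{epsilon-choice}, and keep the values of $c$ and $k$ from that remark. The four computations in the proof of \thmref{algorithm} that currently cost more than $O(n\log n)$ are: finding the boxes $E_1,\dots,E_{n'}$; finding the index $i(w)$ for every $w\in V$; finding the representatives $w_1,\dots,w_{n'}$; and constructing the highways spanner $G_1$. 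I would attack each in turn.

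For the first three, the point is that the proof of \thmref{upper-bound} has only constant-factor slack in everything it asks of the $E_i$'s and the $w_i$'s, so $O(1)$-\emph{approximate} answers suffice. I would build a compressed quadtree on $V$ in $O(n\log n)$ time and, on top of it, standard approximate orthogonal range-searching machinery (BBD-tree style), so that, after $O(n\log n)$ preprocessing, one can in $O(\log n)$ time: (i)~report the number of points of $V$ in a query box up to a constant factor; (ii)~given a preprocessed set of $O(n)$ axis-aligned hypercubes (here the heavy boxes $E_i$, those with $\ge\epsilon n$ points, which are identified using (i)) and a query point $w$, report one of them that contains $w$ and whose side length is within a constant factor of the smallest such; and (iii)~given values attached to the points of $V$, report the minimum value inside a query box (approximately). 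The $E_i$'s are then found exactly as in \thmref{algorithm}: for each of the $n'=O(n/k)$ clusters we draw $O(\log n/\epsilon)=O(k)$ samples and run one approximate count per sample, for a total of $O(n'k\log n)=O(n\log n)$; each $i(w)$ is then one query of type~(ii), and each $w_i$ one query of type~(iii). Replacing ``exactly'' by ``up to a constant factor'' throughout the proof of \thmref{upper-bound} only changes constants; the one place that deserves a careful second look is \clmref{zuper} --- ``the only place in the entire proof where the choice of $w_{i_j}$ matters'' --- which needs $w_{i_j}$ to lie in some box $E_{i'}$ with $\ge\epsilon n$ points and $r_{i'}\le r_{i_{j^*}}$; with approximate counts and an approximately-smallest enclosing box this still holds after enlarging the constant $C$ by an additive amount, and the subsequent packing argument is unaffected.

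For the highways spanner, recall that $G_1$ need only be a $(1+\gamma)$-spanner of the $n'=O(n/k)$ hubs with $O(n)$ edges, and that $\gamma$ enters the master bound \eqref{master-asf} only through an additive $\gamma$ term. The algorithm of Ruppert and Seidel \cite{ruppert.seidel:approximating} builds, for any $\gamma>0$, a $(1+\gamma)$-spanner of $n'$ points with $O((1/\gamma)^{d-1}n')$ edges in $O((1/\gamma)^{d-1}n'\log^{d-1}n')$ time. Choosing $\gamma=\Theta\bigl((\log^{d-2}n/k)^{1/(d-1)}\bigr)$ makes this $O(n/\log^{d-2}n)=O(n)$ edges and $O(n\log n)$ time. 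For $d=2$ this is just $\gamma=\Theta(1/k)$, i.e.\ the same spanner as before and no loss, so $\asf(G'')=1+O((\log^6 n/n)^{1/5})$ follows from \remref{epsilon-choice} with $\kappa=1$. For $d\ge 3$ with $k=n^{1/4}$, the term $\gamma=\Theta\bigl(\log^{(d-2)/(d-1)}n/n^{1/(4d-4)}\bigr)$ now dominates every other term of \eqref{master-asf}, giving $\asf(G'')=1+O\bigl(\log^{(d-2)/(d-1)}n/n^{1/(4d-4)}\bigr)$.

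The main obstacle I expect is not in any single piece but in the bookkeeping of the second step: verifying that the adaptation of the proof of \thmref{upper-bound} in \thmref{algorithm} --- which already replaces balls by square boxes and only guarantees that $E_i$ captures ``at least $m$'' rather than ``the maximum'' number of points --- remains valid when, in addition, every count is known only up to a constant factor and every ``smallest box'' is replaced by an $O(1)$-approximately-smallest one. Concretely, one must re-derive the guarantee that $E_i$ captures $\ge m$ points whenever some box of side $\Theta(L(\Delta_i)/c)$ meeting $D_i$ has $\ge\epsilon n$ points, and then re-run the Type~3 and Type~4 arguments (especially \clmref{zuper}) with these weakened guarantees, checking that the constants $\delta$, $\beta$, $C$, $D$, $D'$ can still be chosen consistently. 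This is the same ``straightforward but tedious'' verification flagged in the proof of \thmref{algorithm}, just with one additional layer of slack.
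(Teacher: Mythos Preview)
Your highways-spanner fix is exactly the paper's: relax $\gamma$ to $(\log^{d-2}n/k)^{1/(d-1)}$ so Ruppert--Seidel runs in $O(n\log n)$; this term then dominates for $d\ge 3$ and disappears for $d=2$, giving the stated bounds. Your choice $\epsilon=(\log n)/k$ is at least as small as the paper's $\epsilon=(\log^{d-1}n)/k$, so the stretch-factor analysis is not hurt.

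The remaining steps, however, take a genuinely different route from the paper. You propose deterministic \emph{approximate} range-searching primitives (BBD-tree style) and argue that constant-factor slack propagates harmlessly through the Type~3/Type~4 analysis and \clmref{zuper}. The paper instead keeps \emph{exact} range trees but runs them on random subsamples: a Bernoulli sample $V'$ of density $\Theta(1/\log^{d-2}n)$ for the counting queries (so building $T_1$ costs $O(n\log n)$ and a Chernoff bound certifies that counts above $\epsilon n$ are preserved up to a factor~2), and a second sample $V''$ of size $n/\log^{d-1}n$ for which $i(w)$ is computed exactly; $w_i$ is then chosen only from $E_i\cap V''$. The correctness argument for this last shortcut is the one nontrivial new idea: the paper revisits the bad-Type-4 analysis and observes that the pigeonhole step actually produces not a single $w^*$ but a set $W^*\subseteq V$ of size $\Omega(\epsilon n)$ of equally good witnesses, so $V''$ hits $W^*$ with high probability and \clmref{zuper} goes through verbatim.

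Your approach may well work, but the primitives you invoke are not the standard ones: BBD-trees give approximate counting and approximate nearest neighbour, but your query~(ii) (smallest-side stabbing among $O(n/k)$ hypercubes) and query~(iii) (range-min of indices) are not off-the-shelf approximate operations, and ``approximate range-min'' in particular needs a definition before it can be implemented. You would need to build these, likely by bucketing the $E_i$'s into $O(\log n)$ radius classes and doing membership tests per class; this is plausible but is additional work beyond ``BBD-tree style'' black boxes. The paper's sampling route sidesteps all of this at the price of the $W^*$ argument.
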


\begin{proof}
  The construction of $G''$ is similar to the construction of
  $G'$ described in the proof of \thmref{algorithm}.  In the
  construction of $G'$, there are three issues that lead to a
  running-time of $\omega(n\log n)$: (1)~the construction of the
  $(1+(1/k)^{1/(d-1)})$-spanner of $H$ takes $\Theta(n\log^{d-1} n)$ time;
  (2)~the sampling algorithm used to find the boxes $E_1,\ldots,E_{n'}$
  that contain at least $\epsilon n$ points takes $\Theta(n\log^{d} n)$
  time; and (3)~determining the index, $i(u)$, of each point $u\in V$
  takes $\Theta(n\log^d n)$ time.  
  We address each of these issues in turn:  
  \begin{enumerate}
     \item We only construct a $(1+(\log^{d-2}
      n/k)^{1/(d-1)})$-spanner of $H$.  Using the algorithm of Ruppert
      and Seidel, the construction of this spanner takes only $O(n\log
      n)$ time.  This modification increases the average stretch factor
      of the resulting graph, so that the lower-order term increases by
      a factor of $\log^{(d-2)/(d-1)} n$.

     \item The sampling process used to find $E_1,\ldots,E_{n'}$
     has two phases.  In Phase~1, a range tree is constructed
     that contains the points of $V$.  In Phase~2, $O(n'\log
     n/\epsilon)$ queries are performed on this range tree.

     Phase~1 takes $O(n\log^{d-1} n)$ time.  To speed up Phase~1,
     we instead construct a range tree, $T_1$, on a Bernoulli sample
     $V'\subseteq V$ where each point is sampled independently with
     probability $p=\alpha/\log^{d-2} n$.  A standard application of
     Chernoff's Bounds \cite[Appendix~A.1]{alon.spencer:probabilistic}
     shows that, with high probability,
     \begin{enumerate}
       \item $T_1$ can be constructed in $O(n\log n)$ time;
       \item for all boxes, $B$, with $|B\cap V|\ge \epsilon n$,
       \[  
           (1/2)|B\cap V|\le  |B\cap V'|/p \le 2|B\cap V| \enspace;
       \]
       \item for all boxes, $B$, with $|B\cap V|\le \epsilon n/2$,
       \[    
           |B\cap V'|/p \le \epsilon n \enspace .
       \]
     \end{enumerate}
     Properties~(b) and (c) above ensure that the quantity $|B\cap V'|/p$,
     which can be computed in $O(\log^{d-1} n)$ time using $T_1$, is an
     accurate enough estimate of $|B\cap V|$.

     Phase~2 requires sampling $\alpha n'\ln n/\epsilon$ points and,
     for each sample point, $u$, counting the number of points of $V$
     in some box centered at $u$.  By increasing the value of $\epsilon$
     from $\epsilon=1/k$ to $\epsilon=\log^{d-1} n/k$, this counting can
     be done in $O(n\log n)$ time using $T_1$.  \remref{epsilon-choice},
     with the value $\kappa=d-1$, explains why this new choice of
     $\epsilon$ does not increase the average stretch factor for $d\ge
     3$, and increases it by a factor of $\log n$ for $d=2$.

    \item To determine the index, $i(u)$, of each point $u\in V$, we
      first construct a range tree, $T_2$ for (the duals of) some boxes
      in  $\{E_1,\ldots,E_{n'}\}$.  In $T_2$, we include every box,
      $B\in\{E_1,\ldots,E_{n'}\}$ such that $|B\cap V'|/p\ge\epsilon n$.
      From the preceding discussion, with high probability, $T_2$ then
      contains only boxes, $B$ such that $|B\cap V|\ge \epsilon n/2$ and
      $T_2$ includes every box, $B$ such that $|B\cap V|\ge 2\epsilon n$.

      Determining which boxes to include in $T_2$ requires $n'$ queries
      in $T_1$, so this takes $O(n'\log^{d-1} n)\subset O(n\log n)$ time.
      Building the tree $T_2$ takes $O(n'\log^{d} n)\subset O(n\log n)$
      time.  At this point, we would like to use $T_2$ to compute $i(w)$
      for every point $w\in V$, but this would take $\Omega(n\log^d
      n)$ time.  Instead, we take another sample, $V''\subseteq V$
      of size $n/\log^{d-1} n$ and compute $i(w)$ for each $w\in V''$.
      This takes only $O(n\log n)$ time.

      Finally, we put each pair $(u,i(u))$ for each $u\in V''$ into
      another range tree $T_3$.  This takes only $O(|V''|\log^{d-1}
      n)=O(n)$ time.  We then query $T_3$ with each box in
      $\{E_1,\ldots,E_{n'}\}$ to determine the point $w_i\in E_i\cap V''$
      that minimizes $i(w_i)$.  This takes only $O(n'\log^{d-1}n)\subset
      O(n\log n)$ time.

      It remains to argue that the point, $w_i$, which minimizes $i(w_i)$
      over all $w_i\in E_{i}\cap V''$ is a good-enough replacement for
      the point, $w_i^*$, that minimizes $i(w_i^*)$ over all $w_i^*\in
      E_i\cap V$.  To establish this, it is necessary to revisit the
      proof of \thmref{upper-bound}.  The only place in the proof of
      \thmref{upper-bound} in which the choice of $w_i$ plays a role is
      in counting the number of bad Type~4 pairs.

      Recall that this part of the proof of \thmref{upper-bound} assumes
      that the number of bad Type~4 pairs is greater than $\epsilon
      n^2$ and uses this assumption to derive a contradiction.
      In particular, the proof shows the existence of a point $w^*$
      that satisfies Conditions~B1--B3.  Walking through the proof, we
      see that it continues to work if there is any point $w^*\in V''$
      that satisifies Conditions~B1--B3.  In particular, $w^*\in V''$
      is sufficient to establish \clmref{zuper}, which is the only place
      in the entire proof of \thmref{upper-bound} that makes use of the
      fact that $w_i$ minimizes $i(w_i)$ over all $w_i\in E_i$.

      In the proof of \thmref{upper-bound}, the existence of $w^*$
      is established by the pigeonhole principle; we have subsets
      $S_1,\ldots,S_r$ of $V$ whose total size is $\Omega(\epsilon
      n^2/k)$ and we conclude that some element $w^*\in V$ must occur in
      $\Omega(\epsilon n/k)$ subsets.  However, the number of subsets,
      $r$, is only $O(n/k)$.  We can therefore make a stronger conclusion:
      There are $\Omega(\epsilon n)$ elements $w^*\in V$ such that $w^*$
      appears in $\Omega(\epsilon n/k)$ subsets.

      Thus, there exists a set $W^*\subseteq V$ of size $\Omega(\epsilon
      n)$ such that any point $w^*\in W^*$ is good enough to derive the
      contradiction required to bound the number of bad Type~4 pairs.
      If even one point of $W^*$ appears in $V''$, then the bound on the
      number of bad Type~4 pairs holds.  The sample, $V''$, is taken after
      the boxes $E_1,\ldots,E_{n'}$ and therefore after $W^*$ has been
      defined, so $V''$ and $W^*$ are independent.  The size of $W^*$
      is $\Omega(\epsilon n)$, so the probability that a randomly chosen
      element of $V$ is in $W^*$ is $\Omega(\epsilon)$. The sample $V''$
      contains $n/\log^{d-1} n$ randomly chosen elements of $V$, so the
      probability that $V''$ and $W^*$ are disjoint is at most
      \[
         (1-\Omega(\epsilon))^{n/\log^{d-1} n} 
            \le \exp(-\Omega(\epsilon n/\log^{d-1} n)) 
            \le n^{\alpha} \enspace ,
      \]
      for any constant $\alpha >0$ and all sufficiently large values
      of $n$.  Therefore, with high probability $V''$ contains at least
      one element of $W^*$ and the number of bad Type~4 pairs is at most
      $\epsilon n^2$.
  \end{enumerate}
  This completes the proof of \thmref{algorithm-faster}.
\end{proof}

\section{Lower Bounds}
\seclabel{lower-bound}

Next, we prove a simple lower-bound which shows that our bound on the
average stretch factor is at least of the right flavour: Combined with
the upper-bound of \thmref{upper-bound}, the following result shows that
the optimal bound is $1+O(1/n^{\delta})$ for some $\delta\in(1/(2d+1),1/2]$.

\begin{thm}\thmlabel{lower-bound}
  For every positive integer, $n$, there exists a set, $V$, of $n$
  points in $\R^2$, such that every geometric graph, $G$, with vertex
  set $V$ and having $O(n)$ edges has $\asf(G)\ge 1 + \Omega(1/\sqrt{n})$.
\end{thm}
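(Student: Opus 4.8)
The plan is to exhibit a point set for which a constant fraction of pairs are at distance $\Theta(1/\sqrt{n})$ and for which any linear-size graph must, for many of these close pairs, route through a detour of additive length $\Omega(1/\sqrt{n})$, yielding a multiplicative stretch of $1+\Omega(1)$ on a constant fraction of pairs and hence $\asf(G)\ge 1+\Omega(1/\sqrt n)$ after we rescale. The natural candidate for $V$ is a $\sqrt n\times\sqrt n$ grid with unit spacing (equivalently, rescaled to lie in $[0,1]^2$ with spacing $1/\sqrt n$). The key structural fact is that a graph with $O(n)$ edges has average degree $O(1)$, so by Markov's inequality a constant fraction of the vertices have degree at most some constant $\beta$. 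Call these vertices \emph{low-degree}.

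The heart of the argument is a local one: if $u$ is a low-degree vertex and $w$ is one of its grid-neighbours (at distance $1$ in the unrescaled grid), then either $uw\in E$, or the shortest $u$–$w$ path in $G$ leaves the immediate neighbourhood of $u$ and comes back, costing length at least, say, $3$ instead of $1$, so $\|uw\|_G/\|uw\|\ge 3$. Since $u$ has at most $\beta$ incident edges, at most $\beta$ of its (up to $4$) grid-neighbours $w$ can be joined to it directly; but more to the point, I would count over \emph{all} close pairs: the number of grid-neighbour pairs is $\Theta(n)$, the number incident to a low-degree vertex is $\Theta(n)$, and the number of those that are actually edges of $G$ is at most $|E|=O(n)$ — so I must be careful, since that does not immediately give a positive fraction of "detoured" pairs. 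The fix is to make the close pairs much more numerous than the edge budget: instead of grid-neighbours at distance $1$, consider all pairs $\{u,w\}$ with $u$ low-degree and $\|uw\|\le L$ for a large constant $L$; there are $\Theta(L^2 n)$ such pairs but only $O(n)$ edges, so for all but a $\Theta(1/L^2)$-fraction of them $uw\notin E$. For such a pair, the shortest path uses at least two edges, and I want to argue it has length $\ge \|uw\| + \Omega(1)$ on average. This is the step that needs care and is the main obstacle.

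To make that averaging rigorous I would argue as follows. Fix a low-degree vertex $u$ with its $\le\beta$ incident edges, going to neighbours $x_1,\dots,x_{\beta}$. Any path from $u$ to a target $w$ starts with some edge $ux_j$, so $\|uw\|_G\ge \|ux_j\|+\|x_jw\|_G\ge \|ux_j\|+\|x_jw\|\ge \|uw\|$ with equality only if $w$ lies on the ray from $u$ through $x_j$ and the rest of the path is a geodesic straight segment. Among the $\Theta(L^2)$ grid points $w$ within distance $L$ of $u$, only $O(\beta L)$ of them lie within (say) distance $1/2$ of one of the $\beta$ rays from $u$ through the $x_j$; for every other $w$, the triangle inequality gives a quantitative gap $\|ux_j\|+\|x_jw\|\ge \|uw\|+c$ for an absolute constant $c>0$ (this is just the statement that a broken path through a point bounded away from the segment $uw$ is longer than the segment by a constant, when all lengths are $\Theta(1)$ to $\Theta(L)$ — and here $\|ux_j\|\ge 1/2$ since the nearest grid point is at distance $1$, or we can simply note $x_j\ne u$ so $\|ux_j\|\ge 1$). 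Hence for at least $\Theta(L^2)-O(\beta L)=\Theta(L^2)$ of the targets $w$ (choosing $L$ a large constant relative to $\beta$), $\|uw\|_G\ge\|uw\|+c$, so $\|uw\|_G/\|uw\|\ge 1+c/\|uw\|\ge 1+c/L$.

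Summing over all $\Theta(n)$ low-degree vertices $u$, we get $\Omega(L^2 n)=\Omega(n)$ pairs (for fixed constant $L$) each contributing a term at least $1+c/L$ to the sum defining $\asf(G)$, while all other terms are at least $1$. Dividing by $\binom n2$ gives $\asf(G)\ge 1+\Omega(1/n)$ in the unrescaled picture. To convert this to $1+\Omega(1/\sqrt n)$, rescale $V$ into the unit square so that the grid spacing becomes $1/\sqrt n$: Euclidean distances shrink by a factor $\sqrt n$, the additive gap $c$ becomes $c/\sqrt n$, and for the $\Omega(n)$ selected pairs (now at distance $O(L/\sqrt n)$) the ratio is $1+\Omega((c/\sqrt n)/(L/\sqrt n))=1+\Omega(1)$ — wait, that overshoots; the correct bookkeeping is that the \emph{ratio} is scale-invariant, so each selected pair still contributes $\ge 1+c/L$, an absolute constant, and with $\Omega(n)$ such pairs out of $\binom n2$ this gives $\asf(G)\ge 1+\Omega(1/n)$, not $1/\sqrt n$. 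So to reach $\Omega(1/\sqrt n)$ I should instead choose $L=L(n)\to\infty$ slowly, or — cleaner — take $V$ to be a $1\times\sqrt n$ "thin" arrangement (a single row of $n$ collinear-ish points, perturbed), where the number of close pairs at distance $\le L$ is only $\Theta(Ln)$ but a detour forced on a low-degree vertex costs $\Omega(1)$ additively while typical distances are $\Theta(\sqrt n)$; then $\Omega(n)$ pairs each contribute $\Omega(1/\sqrt n)$ extra to the average, giving $\asf(G)\ge 1+\Omega(1/\sqrt n)$. The main obstacle, and the place I expect the real work, is precisely this final geometric/accounting step: choosing the configuration so that "close pair, not an edge" forces a stretch contribution whose sum over the $\Omega(n)$ forced pairs is $\Omega(\sqrt n)\cdot$(something), i.e.\ matching the claimed $1/\sqrt n$ rather than the easy $1/n$.
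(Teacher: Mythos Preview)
Your grid argument is sound as far as it goes, and you correctly diagnose its limitation: with $\Theta(n)$ close pairs each forced to stretch by a constant factor, you get only $\asf(G)\ge 1+\Omega(1/n)$, since the ratio is scale-invariant and you are dividing $\Theta(n)$ excess by $\binom{n}{2}$. Your attempted fix via a ``thin arrangement'' is not developed into an argument, and in its stated form (a perturbed row of collinear points) it is not clear it can be made to work---near-collinear points admit a path graph with stretch close to $1$ on \emph{all} pairs.

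The gap is structural: to reach $1+\Omega(1/\sqrt{n})$ you need the total excess $\sum_{\{u,w\}}(\|uw\|_G/\|uw\|-1)$ to be $\Omega(n^{3/2})$, not $\Omega(n)$. Looking at close pairs cannot do this, because there are only $O(n)$ of them at any fixed scale. The paper instead arranges for $\Theta(n^2)$ pairs each to have stretch $1+\Omega(1/\sqrt{n})$. The construction is two vertical columns of $n/2$ unit-spaced points at horizontal distance $n/2$, so every cross-pair $(u,w)\in A\times B$ has $\|uw\|=\Theta(n)$. An edge $u'w'$ with $u'\in A$, $w'\in B$ is declared to \emph{cover} only those cross-pairs with $u$ within $\sqrt{\alpha n}$ of $u'$ and $w$ within $\sqrt{\alpha n}$ of $w'$---so each edge covers $O(n)$ pairs, and $O(n)$ edges cover fewer than half of the $\Theta(n^2)$ cross-pairs for suitable $\alpha$. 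For an uncovered pair, any $A$--$B$ edge on a $u$--$w$ path has an endpoint at vertical distance $\ge\sqrt{\alpha n}$ from $u$ or from $w$, forcing an additive detour of $\Omega(\sqrt{n})$ and hence a ratio $1+\Omega(\sqrt{n}/n)=1+\Omega(1/\sqrt{n})$. The key difference from your approach is that the ``bad'' pairs are \emph{far} pairs with a small relative detour, not close pairs with a large relative detour, and there are quadratically many of them.
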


\begin{proof}
  For simplicity, we assume $n$ is even.  The point set, $V$, has
  its points evenly distributed on two opposite sides of a square.
  The point set $V=A\cup B$, where
  \[  
      A = \{(0,i): i\in\{1,\ldots,n/2\}\} 
  \]
  and
  \[  
      B = \{(n/2,i): i\in\{1,\ldots,n/2\}\} 
  \]
  is an example (see \figref{lower-bound}.a).

  \begin{figure}
    \begin{center}
      \begin{tabular}{c@{\hspace{2cm}}c}
      \includegraphics{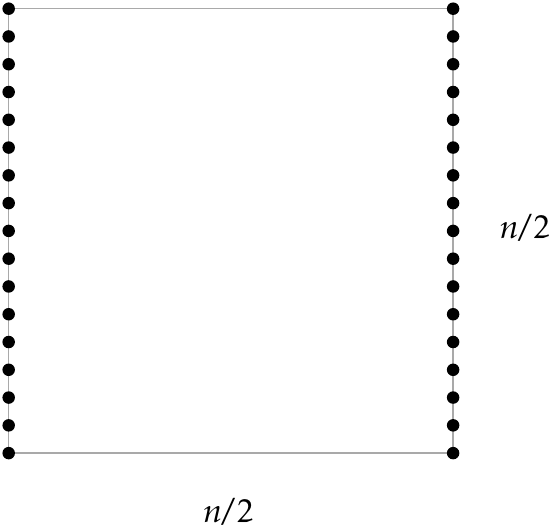} & \includegraphics{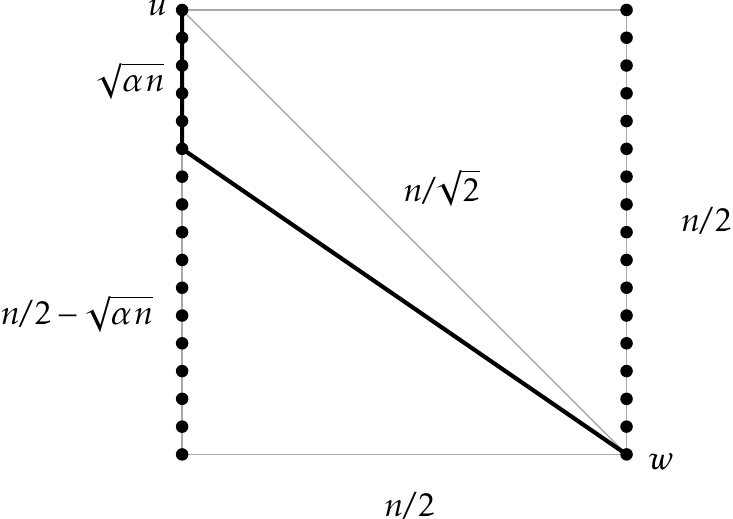} \\
      (a) &\hspace{1cm} (b) 
      \end{tabular}
    \end{center}
    \caption{The lower-bound (a)~point set for \thmref{lower-bound}, and
      (b)~the best-case ratio $\|uw\|_G/\|uw\|$ for a pair $(u,w)$ that
      is not covered by any edge.}
    \figlabel{lower-bound}
  \end{figure}

  Let $G$ be any graph with vertex set $V$.  We say that an edge $uw$
  with $u\in A$ and $w\in B$ \emph{covers} the set of pairs
  \[
     \{ \left(u+(0,i), w+(0,j)\right) : 
          i,j\in\{-\sqrt{\alpha n},\ldots,\sqrt{\alpha n}\}\}
  \]
  for some constant $\alpha$ to be discussed later.  Thus, any edge of
  $G$ covers at most $4\alpha n$ pairs in $A\times B$.

  Next, observe that if some pair of points $u\in A$ and $w\in B$ is
  not covered by any edge of $G$, then a straightforward minimization
  argument (see \figref{lower-bound}.b) shows that
  \begin{align}
     \frac{\|uw\|_G}{\|uw\|}
       & \ge \frac{\sqrt{\alpha n}+\sqrt{(n/2)^2+(n/2-\sqrt{\alpha n})^2}}
                {n/\sqrt{2}} \notag \\
       & = \frac{\sqrt{\alpha n}+\sqrt{n^2/2-\sqrt{\alpha}n^{3/2}+\alpha n/2}}
                {n/\sqrt{2}} \notag \\
       & \ge \frac{\sqrt{\alpha n}+n/\sqrt{2}-\sqrt{\alpha n/2}}
               {n/\sqrt{2}} \eqlabel{blobber} \\
       & \ge 1+\Omega(1/\sqrt{n}) \notag \enspace .
  \end{align}
  (Inequality~\eqref{blobber} is obtained by comparing 
   $\left(\sqrt{n^2/2-\sqrt{\alpha}n^{3/2}+\alpha n/2}\right)^2$ and 
   $\left(n/\sqrt{2}-\sqrt{\alpha n/2}\right)^2$.)
  If $G$ has $m\in O(n)$ edges, then we select $\alpha \le
  \binom{n}{2}/(8mn)$ so that
  \[  
     m4\alpha n \le \frac{\binom{n}{2}}{2} 
  \]
  In this way, at least half of the $\binom{n}{2}$ pairs of points in $V$
  are not covered by any edge and therefore,
  \[
     \asf(G) \ge 1 + \binom{n}{2}^{-1}\cdot\frac{\binom{n}{2}}{2}\cdot
          \Omega(1/\sqrt{n}) = 1 + \Omega(1/\sqrt{n}) \enspace . \qedhere
  \]
\end{proof}

We remark that the proof of \thmref{lower-bound} is easily modified to
provide a tradeoff between the number of edges of $G$ and the average
stretch factor.  In particular, if $G$ has $m\in o(n^2)$ edges, then
\[
   \asf(G) \ge 1 + \Omega(1/\sqrt{m}) \enspace .
\]

\section{Discussion}
\seclabel{discussion}

We have shown that, for any set, $V$, of $n$ points in $\R^d$, we can
construct, in $O(n\log n)$ time, a graph on $V$ whose average stretch
factor is $1+o_n(1)$.  Our construction consists of three parts, (1)~a
2-spanner, $G_2$ of $V$, (2)~a $1+o_n(1)$-spanner of a subset $H\subset V$
of so-called hubs, and (3)~a collection of edges that join all vertices
within each group, $V_i$, to a single representative vertex, $w_i$,
within a very dense (compared to $V_i$) set of vertices, $E_i\cap V$.

\subsection{Realistic Networks(?)}

We note that, if one has some form of reasonable well-distributed
assumption about the points in $V$, like that used by Aldous and Kendall,
then it is fairly straightforward to show that the first two parts of
our construction are sufficient to obtain an average stretch factor
of $1+o_n(1)$. The third part of our construction is only required
to deal with instances in which there are exponential differences in
density of points of $V$.  It seems natural to ask whether this part
of the construction is necessary in any real-world network or whether
the vertices of real-world networks are always well-distributed.

The first two parts of our construction are quite natural and can be
recognized in many real-world networks.  The most common (though admittedly,
imperfect) example is road networks where individual buildings are
interconnected by roads, which very often form a partial grid (a
$\sqrt{2}$-spanner). The cities, towns, and villages, containing these
buildings are themselves interconnected by a relatively fast, and fairly
direct, system of highways.

The third part of our construction seems less natural.  In road networks,
this part of the construction would correspond to densely-populated
areas that have several direct routes to them from well-separated
locations.  Since this part of our construction is only required to
deal with pathological cases involving inter-point distances that vary
exponentially, one might think that it does not appear in real-world
networks.

A quick inspection of the U.S. Highway System shows that, even by 1926,
there were many direct highway connections to Chicago, Detroit, Kansas
City, Memphis, Newport, and other large cities (see \figref{us-map}).
These highways were expensive and would have been built unless they
added real value to the road network.  While this does not correspond
perfectly with the third part of our construction, it does suggest that
a simple two-level network of clusters, each having a single hub, does
not produce real-world networks of sufficient quality; \emph{some}
form of extra augmentation is necessary.

\begin{figure}
  \begin{tabular}{@{}c@{\hspace{10pt}}c@{\hspace{10pt}}c@{\hspace{10pt}}c@{\hspace{10pt}}c@{}}
    \includegraphics[width=\dimexpr.2\textwidth-8pt]{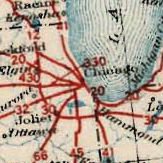} &
    \includegraphics[width=\dimexpr.2\textwidth-8pt]{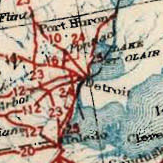} &
    \includegraphics[width=\dimexpr.2\textwidth-8pt]{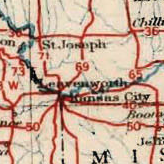} &
    \includegraphics[width=\dimexpr.2\textwidth-8pt]{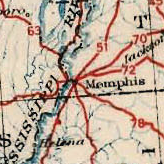} &
    \includegraphics[width=\dimexpr.2\textwidth-8pt]{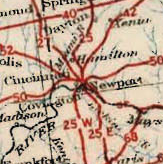}  \\
  \end{tabular}
  \caption{Large cities have many direct routes to them. (Source:
  Map of the final U.S. Highway system as approved November 11, 1926.
  This map is in the public domain.)}
  \figlabel{us-map}
\end{figure}

\noindent

\subsection{Open Problems}

Our results leave many areas open for further research.  We say that a
graph, $G=(V,E)$ has \emph{good average stretch} if $\asf(G)=1+o_n(1)$
and $|E|=O(n)$.

The following open problems have to do with strengthenings of
\thmref{upper-bound} in which $G$ has additional properties.

\begin{op}
  Given a point set, $V$, does there always exist a good average stretch
  graph $G=(V,E)$ whose total edge length is close to that of the minimum
  spanning tree?
\end{op}

\begin{op}
  Given a point set, $V$, does there always exist a good average stretch
  graph $G=(V,E)$ whose maximum degree is bounded by a constant?
\end{op}

\begin{op}
  Given a point set, $V$, does there always exist a good average stretch
  graph $G=(V,E)$ that is $k$-fault tolerant? That is, for any set
  $F\subset V$, $|F|\le k$, $\asf(G\setminus F)=1+o_n(1)$.
\end{op}

\begin{op}
  Bose \etal\ \cite{bose.dujmovic.ea:robust} define $f(k)$-robust spanners
  in terms of the (worst-case) stretch factor and their definition
  extends naturally to average stretch factor.  Given a point set, $V$,
  does there always exist a good average stretch graph $G=(V,E)$ that
  is $f(k)$-robust, for some reasonable function $f(k)$?
\end{op}

The following question asks if the upper-bound can be proven in a more
general setting:

\begin{op}\oplabel{metric-space}
  What conditions on a metric space $(V,d)$ are necessary and sufficient
  so that there always exist a graph $G=(V,E)$, $|E|\in O(n)$ with
  $\asf(G)=1+o_n(1)$?  (Here shortest paths in $G$ are measured in terms
  of the cost of their edges in the metric space.)
\end{op}

It seems likely that some of the techniques used to prove
\thmref{upper-bound} are applicable to metric spaces of bounded doubling
dimension \cite[Section~10.13]{heinonen:lectures}.  Is bounded doubling
dimension the weakest possible restriction on the metric space?  It is
clear that some restrictions on the metric space are required: In the
metric space in which all points have unit distance, the average stretch
factor of a graph, $G=(V,E)$, having $m$ edges is at least
\[
    \binom{n}{2}^{-1}\left(m + 2\left(\binom{n}{2}-m\right)\right) 
\]
since, if there is no edge between $u$ and $w$ in $G$, then $\|uw\|_G
\ge 2$.  Therefore any graph with average stretch factor $1+o_n(1)$
must have $\binom{n}{2}-o(n^2)$ edges.

\section*{Acknowledgement}

The authors of this paper are partly funded by NSERC and CFI.  The author
are indebted to Shay Solomon for providing helpful feedback on an earlier
version of this paper.

\bibliographystyle{abbrvurl}
\bibliography{avgstretch}

\section*{Authors}

\noindent
\includegraphics[width=.31\textwidth]{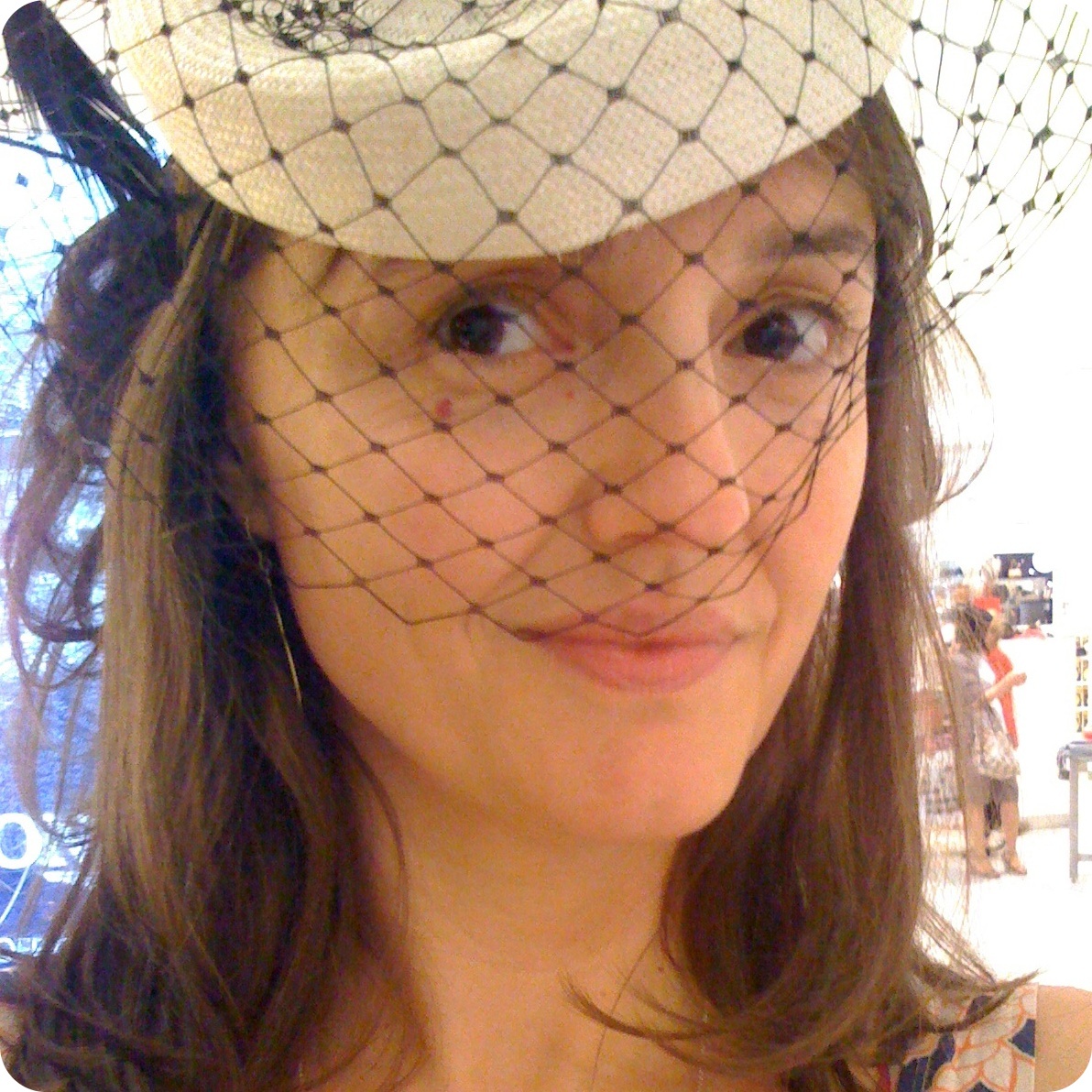}%
\hspace{.035\textwidth}%
\includegraphics[width=.31\textwidth]{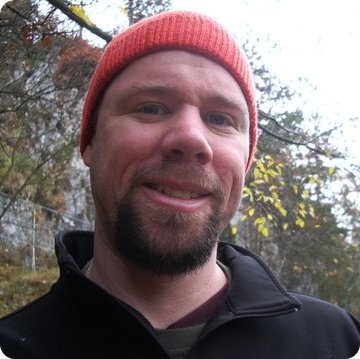}%
\hspace{.035\textwidth}%
\includegraphics[width=.31\textwidth]{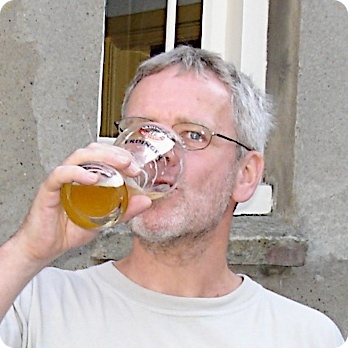}%

\noindent\emph{Vida Dujmovi\'c.}
School of Mathematics and Statistics and Department of Systems and Computer Engineering, Carleton University

\noindent\emph{Pat Morin} and \emph{Michiel Smid.}
School of Computer Scence, Carleton University

\end{document}